\documentclass[pra,twocolumn,nofootinbib]{revtex4-2}
\usepackage{mathtools}
\usepackage{tikz-cd} 
\usepackage{pst-node}
\usepackage{adjustbox}
\usepackage[utf8x]{inputenc}     
\usepackage{enumerate}
\usepackage{bm}  
\usepackage{float} 
\usepackage{wasysym} 
\usepackage{mathrsfs} 
\usepackage[colorlinks=true,     linkcolor=blue, citecolor=blue, filecolor=blue, urlcolor=blue]{hyperref}
\usepackage{amsfonts} 
\usepackage{amsbsy} 
\usepackage{amscd} 
\usepackage{pstricks} 
\usepackage{multirow} 
\usepackage{tikz}
\usepackage{color}
\usepackage{slashed}
\usepackage{array}
\usepackage{tablefootnote}

\usetikzlibrary{arrows,positioning,shapes.geometric} 
\usepackage[compat=1.1.0]{tikz-feynman}          
\usepackage{slashed}
\usepackage{centernot}
\usepackage{multirow}
\usepackage{tabularx}
\usepackage{soul}  
\usepackage{listings} 
\newif\iflocaldraftfigures
\localdraftfiguresfalse
\iflocaldraftfigures
\renewcommand{\includegraphics}[2][]{\fbox{\scriptsize\ttfamily\detokenize{#2}}}
\fi
\usepackage{color}

\usepackage{amsthm}
\newtheorem{theorem}{Theorem}
\newtheorem{definition}{Definition}
\newtheorem{corollary}{Corollary}
\newtheorem{lemma}{Lemma}

\newtheorem{proposition}{Proposition}

\newcommand{\G}{\mathcal{G}}

\newcommand{\M}{\mathcal{M}} 
\newcommand{\x}{\mathbf{x}}

\newcommand{\mf}[1]{\mathfrak{#1}}

\newcommand{\mc}[1]{\mathcal{#1}} 
\newcommand{\mbf}[1]{\mathbf{#1}}
\newcommand{\y}{\mathbf{y}}

\DeclareMathOperator{\im}{Im}

\usepackage{makecell}
\usepackage{pifont}
\usepackage{amsthm}

\newcommand{\ket}[1]{|#1\rangle}
\newcommand{\bra}[1]{\langle#1|}
\newcommand{\braket}[2]{\langle#1|#2\rangle}
\newcommand{\quantexp}[2]{\langle#2|#1|#2\rangle}

\newcommand{\GL}[2]{\text{GL}(#1,#2)}

\newcommand{\rank}{\text{rank}}
\usepackage{amsthm}
\usepackage{booktabs}

\usepackage{mathtools}

\DeclarePairedDelimiter\floor{\lfloor}{\rfloor}
\newcommand{\vtheta}{{\bm\theta}}
\newcommand{\valpha}{{\bm\alpha}}
\newcommand{\vbeta}{{\bm\beta}}
\newcommand{\vvarphi}{{\bm\varphi}}
 
\usetikzlibrary{quantikz}

\renewcommand{\selectlanguage}[1]{}

\begin{document} 
	\title{
		Equivalence of maximal and generic reachability for non-universal Variational Quantum Circuits
	} 
	
	\author{Vishal~S.~Ngairangbam}
	\email{vishal.ngairangbam@kit.edu}
	
	\author{Michael~Spannowsky} 
	\email{michael.spannowsky@kit.edu}
	\affiliation{Institute for Theoretical Physics, Karlsruhe Institute of Technology, \\ Wolfgang-Gaede-Str. 1, 76131 Karlsruhe, Germany}
	\affiliation{Institute for Quantum Materials and Technologies, Karlsruhe Institute of Technology, \\ Karlsruhe 76131, Germany}
	\begin{abstract}  
		Employing problem-specific non-universal Variational Quantum Circuits aligned with a suitable state preparation has become a standard approach to counter the difficulty in training universal ans\"atze.  However, due to their non-universality, diagnosing their reachability and trainability remains reference-state-specific. In this work, we establish the equivalence of maximal reachability and generic reachability over arbitrary reference states as a consequence of the \emph{principal orbit-type theorem}. Thereafter, assuming that the global minimum of the cost function is achieved on a subset that can be described as the image of a smooth function, we derive necessary and sufficient conditions for non-zero probability of reachability under generically sampled reference states. Furthermore, when the solution set is assumed to be realised through a real analytic map that embeds a solution manifold, we show that local surjectivity is generically obtained on the entire solution manifold if it is attained at a single point. As a practical design rule, the need for local surjectivity automatically translates to a necessary dimensional criterion: reachability requires the target set's topological dimension to be at least as large as the co-dimension of the generic orbit. 
		Numerical simulations show that finite-depth optimisation consistently fails when the dimensional obstruction applies, while unobstructed cases show improved convergence.

	\end{abstract}
	
	\maketitle

	\section{Introduction}
	
	The difficulty in training~\cite{McClean2018,Bittel2021,Anschuetz2022,Larocca2025BPreview} universal Variational Quantum Circuits (VQCs) that can represent any special unitary transformation acting on the $n$-qubit quantum system has instigated the use of non-universal VQC designs that can represent unitaries within proper subgroups. However, non-universal VQC designs that can represent exponentially large subgroups of the special unitary group still show barren plateaus~\cite{Cerezo2021cost,Larocca2022diagnosing,Ragone2024,Fontana2024} in the loss landscape with polynomially scaling ones showing better landscapes. An important and complementary requirement of such non-universal ans\"atze, is to quantify their state reachability~\cite{Akshay2020,Haug2021,Singh:2025evh,Oh2026} and hence, suitability for different tasks and characterisation of possible quantum advantages~\cite{Goh:2023kcm,Cerezo2025}. Due to the non-transitive nature of the unitary group action, such analyses are generally reference-state dependent, and many analyses primarily restrict to an invariant subspace of the representation~\cite{Zeier2011,DAlessandro2021,Albertini2023,Wiersema2020HVA,Gard2020,Meyer2023,Sauvage2024} of the dynamical Lie algebra and do not generalise~\cite{Monbroussou:2023syc}.

	\begin{figure*}[t]
		\includegraphics[width=0.7\textwidth]{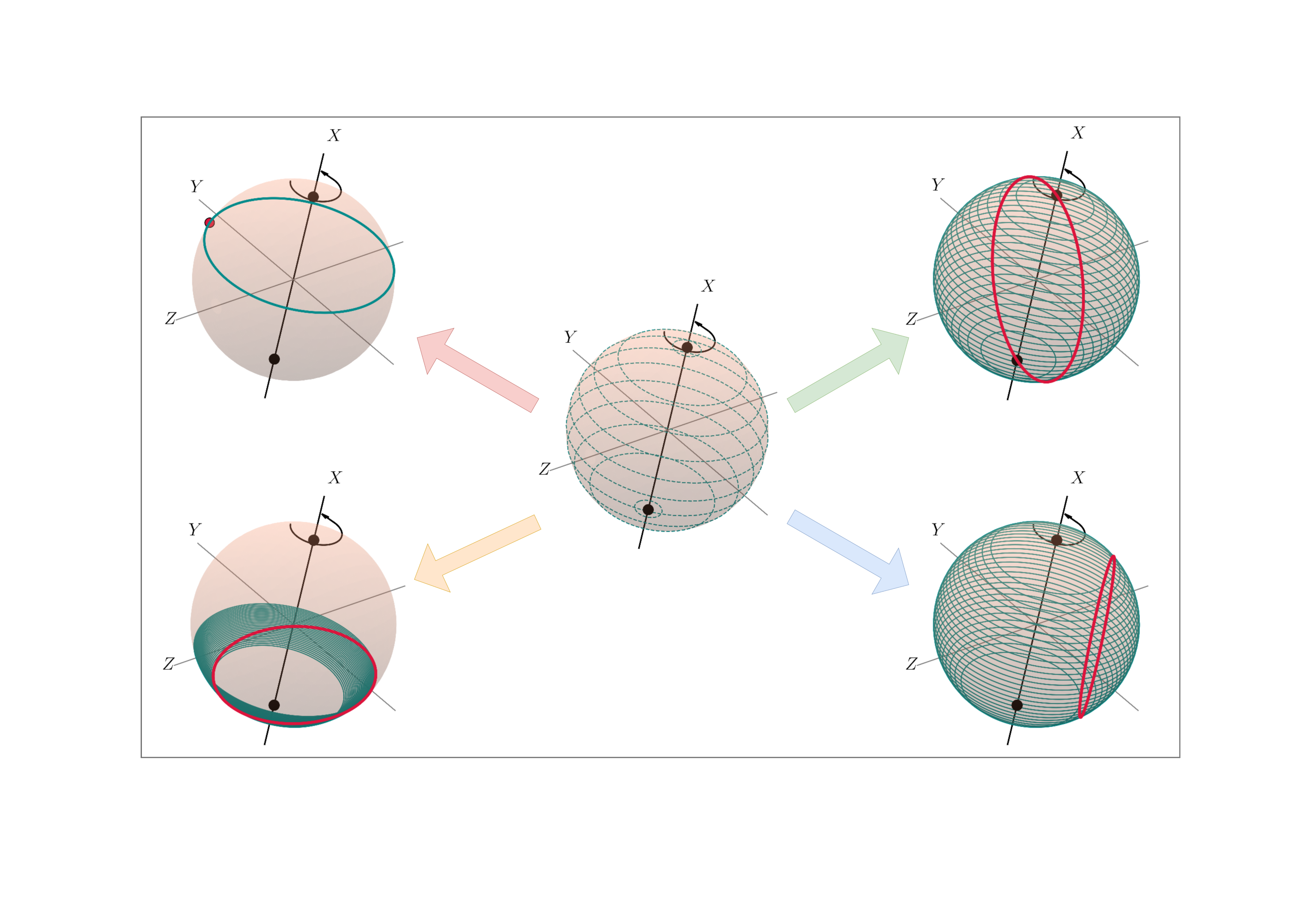}
		\caption{Stratification of the sphere $S^2$ under $SO(2)$ rotations about the $X$ axis and its relation to different orientations of the solution manifold $\M^*$. At the top left, $\M^*$ is a single point on $S^2$ and the initial point must lie on the exact orbit. When $\M^*$ is a circle, the range of orbits that intersect it depends on its relative orientation with the rotation. The area of the union of such orbits is non-zero as long as there is one point of intersection where the tangent to the circle and an orbit are not parallel. In this work, we prove that this intuitive picture carries over to compact $\G$-actions on $\M_N$ by utilising the principal orbit-type theorem. 
		}
		\label{fig:bloch_strat} 
	\end{figure*}
	In this work, by virtue of the \emph{principal orbit-type theorem}~\cite{princ_orb_smooth,BredonGlenE1972Itct,tomDieck1987,Duistermaat2000} and the resulting \emph{orbit-type} stratification of the pure state manifold, we establish that maximal reachability of non-universal VQC is achieved for generically prepared reference states.  Similar orbit-type stratifications are well-established in \emph{gauge theory} literature~\cite{Rudolph:2002fe,Hertsch:2010zz,Rudolph:2017pug}.
	A simple illustration with $SO(2)$ rotations about the $X$-axis on a sphere is shown in Fig.~\ref{fig:bloch_strat} (centre). The generic circular orbits cover the entire sphere except for the two antipodal points where the $X$-axis intersects the sphere. These two antipodal points remain fixed under the rotation and have $SO(2)$ as their \emph{isotropy subgroup}. Note that the union of all circular orbits has a surface area equal to the sphere, since the two points have zero area (i.e. a \emph{null} set). 
	
	The same stratification of the pure state manifold into generic and non-generic orbits arises, \emph{mutatis mutandis}, for any smooth action of a compact  Lie group. An immediate conclusion is that \emph{maximally reachable orbits in the pure state manifold are the generic ones}.  Thereby, assuming that the solution set can be parametrised by the image of a \emph{smooth map}, we derive necessary and sufficient conditions so that there is a non-zero probability of reaching the solution manifold under a generically sampled reference state. A heuristic explanation based on the sphere is outlined in Fig.~\ref{fig:bloch_strat}. A natural corollary is that the dimensions of the solution manifold must at least match the codimensions\footnote{In this work, by $\dim X$, we always refer to the real topological dimension of the object $X$. For any $X$ that is a subset of $Y$, its codimension with respect to $Y$ is simply $\dim Y-\dim X$.} of the principal orbits. Additionally, assuming that the solution manifold can be realised as a real analytic embedding, we show that the regular component (i.e. union of all orbits of maximum topological dimensions) either intersects it generically or it is entirely contained in the singular component. Numerically evaluating the dynamical Lie group action on the ground state manifolds of Hamiltonians for small qubit counts, shows agreement with this behaviour.

	The rest of the paper is structured as follows. We first discuss some recent and important related works in Sec.~\ref{sec:related_work}. This is followed by building up the necessary mathematical prerequisites in Sec.~\ref{sec:prelims}. Sec.~\ref{sec:gen_geom} highlights the main theoretical insights of the paper by utilising the principal orbit-type theorem. In Sec.~\ref{sec:num_sim}, we present numerical simulation to corroborate the underlying theory.  We summarise and conclude in Sec.~\ref{sec:conc}. A brief review of additional mathematical preliminaries is provided in Appendix~\ref{app:background}, and all proofs are collected in Appendix~\ref{app:proofs}.

	\section{Related Work} 
	\label{sec:related_work}
	Geometric analysis of controllability and reachability is well-explored in quantum control~\cite{khaneja00geometric,y8nw-n1wb,opt_cont_geom_rev}. More recently, complementary to the algebraic analysis of reachability and trainability, there is a growing body of literature that looks into such aspects from the geometric perspective. In ref.~\cite{Wiedmann:2025zvz}, convergence  to the ground state of Hamiltonians is discussed for universal and non-universal VQCs by primarily considering the group manifold geometry. Additionally, as expected when restricted to a proper subgroup, reachability is no longer guaranteed from generic states. Our work addresses complementary aspects by considering the looser criterion of probabilistic reachability of the ground state under generic state preparation and its relation to overparametrisation. 
	
	Relatedly, ref.~\cite{Margeta-Cacace:2026rss} analyses the geometry of VQA reachability by approximating the reachable submanifold of the group by Krylov-Lie algebras and Lie groups. Our work focusses on the reachable subset on the state manifold. While this is induced by the group geometry through the action, it is equivalent (i.e. diffeomorphic) if and only if the initial state's isotropy subgroup is trivial. Interestingly, theorem 3.1.1 is a direct analogue on the group manifold of proposition~\ref{prop:toy_reachability} relating the non-zero reachability of a finite volume to rank saturation of the Quantum Fisher Information Matrix.

	\section{Preliminaries}
	\label{sec:prelims}
	\subsection{Setup} 
	Consider Variational Quantum Circuits (VQCs) acting on an $n$-qubit system with parametrised gates built out of a sequence $\mbf{G}=(G_1,G_2,...,G_k)$ of unique generators $G_j$ of the form $W_l(\vvarphi)=\prod_{j=1}^l\,e^{i\, \varphi_j \,G_j}$, $\vvarphi\in\mathbb{R}^l$, for $l\leq k$. These are sequentially applied to form a VQC with $d$ total parametrised gates  as 
	\begin{equation}
		\label{eq:u_vqc} 
		U(\vtheta)=\prod_{a=0}^{q+1}\, W_{l_a}(\vvarphi_a)\quad,
	\end{equation}where $q=\floor{\frac{d}{k}}$, $l_a=k$ for $a\leq q$, $l_{q+1}=d-qk$ and $\vtheta=\oplus_{a=0}^{q}\,\vvarphi_a\in\mathbb{R}^d$.
	Unless otherwise specified, each gate's generator will be regarded as a single Pauli string. Since each gate parametrises a one-parameter subgroup of the compact group $SU(N)$, they are periodic and the parameters can be restricted to a subset $\Theta\subset\mathbb{R}^d$ formed by restricting each gate parameter to its principal range.  
	If $\ket{\psi_r}\in\mc{H}_N\cong\mathbb{C}^{N}$, $N=2^n$, is the initial reference state, then for any $\vtheta\in \mathbb{R}^d$, the evolved state is 
	\begin{equation}
		\label{eq:hilbert_action}
		\ket{\psi(\vtheta)}=U(\vtheta)\,\ket{\psi_{r}}\quad.
	\end{equation}
	By eliminating the global phase ambiguity of $\ket{\psi_r}$, these unitaries equivalently act smoothly on the manifold of pure states $\M_N\cong\mathbb{C}P^{N-1}$, that consist of rays $$\psi\equiv\{e^{i\lambda}\ket{\psi}:\lambda\in\mathbb{R}\}$$ for each representative normalised state $\ket{\psi}\in\mathcal{H}_N$.
	
	The cost function $C(\vtheta)$ is defined in conjunction with a measurement protocol through which the parameters $\vtheta$ are updated via some optimisation algorithm. We leave the exact definition of the cost function arbitrary, but assume that the set of solutions in $\M_N$ can be parametrised by a smooth map $f:\M^*\to\M_N$. Thus, all states in $\im(f)\subset\M_N$ evaluate to the global minimum of $C(\vtheta)$ in $\M_N$.

	\subsection{Dynamical Lie group action}
	
	The range of elements $g\in SU(N)$ represented by $U(\vtheta)$ for $\vtheta\in\Theta$ increases with increasing $d\in\mathbb{N}$. The subgroup $\G$ that contains all possible elements represented by $U(\vtheta)$ as $d\to\infty$, is formed by exponentiating the Dynamical Lie Algebra (DLA) $\mf{g}$, of the generators in $\mbf{G}$. It is constructed by extending the generator set such that the commutators between all generators (including the ones not initially in $\mbf{G})$ are closed. Their span, therefore, forms a subalgebra $\mf{g}$  of $\mf{su}(N)$.

	Let $\mc{U}(\G)$ denote the diffeomorphic image of $\G$ in $\GL{N}{\mathbb{C}}$ corresponding to the representation of $\G$ parametrised by the VQC. Thus, at any arbitrary $d$, the unitary implemented by a VQC is a map $U:\Theta\to\mc{U}(\G)$ so that $\im(U)\subset\mc{U}(\G)$. One therefore has the $\G$-action $\mbf{A}:\G\times\M_N\to\M_N$  on $\M_N\cong\mathbb{C}P^{N-1}$, induced by the corresponding action of $\mc{U}(\G)$ on $\mathcal{H}$. For the ray $\psi_r$, the \emph{orbit map} is defined as $\mbf{A}^{\psi_r}:\G\to\M_N$ with the assignment $g\mapsto \mbf{A}(g,\psi_r)$. The range of rays that can be traversed under the unitary evolution (eq.~\ref{eq:hilbert_action}) is therefore contained in the orbit $\mc{O}_{\psi_r}\subset\M_N$ of $\psi_r$. We can define the parameter-to-orbit map $\Gamma^{\psi_r}:\Theta\to\mc{O}_{\psi_r}$ which describes the geometry of the evolution for a reference ray $\psi_r$. Note that the Lie group $\mc{G}$ is a closed subgroup of $SU(N)$, and therefore \emph{compact}. Moreover, since it consists of  exponentiated Lie algebra elements of $\mf{g}$, it is \emph{connected}. Thus, $\mc{O}_{\psi_r}$ is a connected, embedded submanifold of $\M_N$.

	\subsection{Local overparametrisation on orbits}
	
	Utilising the implicit function theorem, 
	for any state $\psi_r\in\M_N$, states $\psi(\valpha)$ in an open neighbourhood around $\psi(\mbf{0})=\psi_r$ in $\mc{O}_{\psi_r}$ can be described  in terms of local coordinates $\valpha\in\mathbb{R}^{\dim\mc{O}_{\psi_r}}$. Defining the operator $\partial_{\alpha_i}\equiv\frac{\partial}{\partial\alpha_i}$, the components of the Fubini Study metric on $\mc{O}_{\psi_r}$ are given as
	\begin{equation}
		\begin{split}
			[\mbf{g}(\valpha)]_{ij}=\text{Re}&[ \braket{\partial_{\alpha_i}\psi(\valpha)}{\partial_{\alpha_j}\psi(\valpha)}
			\\
			&-\braket{\partial_{\alpha_i}\psi(\valpha)}{\psi(\valpha)}\;\braket{\psi(\valpha)}{\partial_{\alpha_j}\psi(\valpha)}]\quad.
		\end{split} 
	\end{equation} 
	Under the parameter-to-orbit map $\Gamma^{\psi_r}:\Theta\to\mc{O}_{\psi_r}$,  rays in the neighbourhood will be accessed by changes in the parameters $\vtheta$, and therefore we will have $\valpha(\vtheta):\Theta\to\mathbb{R}^{\dim\mc{O}_{\psi_r}}$. 
	Up to a factor of four, the Quantum Fisher Information Matrix (QFIM) is the pullback of $\mbf{g}(\valpha)$ to the parameter manifold
	\begin{equation*}
		\begin{split} 
			\mbf{F}(\vtheta_0)= 4\;\mbf{J}^T_{\psi_r}(\vtheta_0)\;\mbf{g}(\valpha)\;\mbf{J}_{\psi_r}(\vtheta_0)\quad,
		\end{split} 
	\end{equation*}
	\noindent 
	where $\mbf{J}_{\psi_r}(\vtheta_0)$ is the Jacobian representation of differential push-forward $d\Gamma^{\psi_r}_{\vtheta_0}:T_{\vtheta_0}\Theta\to T_{\psi(\vtheta_0)} \mc{O}_{\psi_r}$, $\psi(\vtheta_0)=\Gamma^{\psi_r}(\vtheta_0)$. The positive definiteness of the Riemannian metric $\mbf{g}(\valpha)$ therefore gives 
	\begin{equation*}
		\rank(\mbf{F}(\vtheta_0))=\rank(\mbf{J}_{\psi_r}(\vtheta_0))\quad.
	\end{equation*}
	The dimensions of $\mbf{J}_{\psi_r}(\vtheta_0)$, i.e. $\dim\mc{O}_{\psi_r}\times d$, where $\dim\mc{O}_{\psi_r}$ denotes the intrinsically accessible degrees of freedom naturally sets an upper bound on the QFIM rank~\cite{Larocca:2021jub}
	$$R(\psi_r)\equiv\max_{d\in\mathbb{N}}\left(\max_{\vtheta\in\Theta}(\rank(\mbf{F}(\vtheta)))\right)=\dim\mc{O}_{\psi_r}\quad.$$
	For some fixed $d$, the VQC is said to be overparametrised when $\max_{\vtheta\in\Theta}(\rank(\mbf{F}(\vtheta))=\dim\mc{O}_{\psi_r}$. In view of the preceding discussions, we see that this is a strictly local phenomenon in the orbit $\mc{O}_{\psi_r}$. 
	\subsection{Reachability under exact orbit knowledge}
	As a precursor to reachability of the solution manifold under generic state preparation, let us first work out the analogous reachability of a unique solution ray $\psi_*$, assuming an exact knowledge of its orbit $\mc{O}_{\psi_*}$, i.e. $\mc{O}_{\psi_r}=\mc{O}_{\psi_*}$. 
	Suppose the reference ray $\psi_r$ is sampled uniformly with respect to the Riemannian volume form in $\mc{O}_{\psi_r}$. In terms of local intrinsic coordinates, it is given as 
	\begin{equation*}
		d\Omega_{\mc{O}_{\psi_r}}=\sqrt{\det{(\mbf{g}(\valpha))}}\; d\alpha_{1}\wedge d\alpha_{2}\wedge ...\wedge d\alpha_{\dim\mc{O}_{\psi_r} }\quad,
	\end{equation*} 
	
	For any representative target state, $\ket{\psi_*}\in\psi_*\in\mc{T}_*$, the set of reference states in $\mc{H}_N$ that can reach $\ket{\psi_*}$ can be obtained from the unitary action
	\begin{equation*}
		\ket{\psi_r}=U^\dagger(\vtheta)\;\ket{\psi_*}\quad,
	\end{equation*}
	Thus, on the orbit $\mc{O}_{\psi_r}=\mc{O}_{\psi_*}$, we have the analogous map $\Gamma^*:\Theta\to\mc{O}_{\psi_r}$  from the parameter space to the orbit manifold. Under a uniform probability measure with respect to $d\Omega_{\mc{O}_r}$, the probability that a fixed $\psi_*$ is reachable is therefore
	\begin{equation}
		P\left(\psi_*\in\im(\Gamma^{\psi_r})\right)=\frac{1}{V}\int_{\im(\Gamma^*)} \;d\Omega_{\mc{O}_{\psi_r}}\quad,
	\end{equation}
	where $V=\int_{ \mc{O}_{\psi_r}}\,d\Omega_{\mc{O}_{\psi_r}}$. 
	
	The difference between overparametrised and underparametrised regimes has so far been looked into the loss landscape of the parameter space~\cite{Larocca:2021jub,pmlr-v139-you21c} and the fact that the QFIM rank is generic~\cite{Monbroussou:2023syc} on the connected domain $\Theta$ due to the analyticity of $\Gamma^{\psi_r}:\Theta\to\mc{O}_{\psi_r}$. Utilising the smoothness alone, we can distinguish overparametrised and underparametrised reachability on the orbit $\mc{O}_{\psi_r}$ :   
	\begin{proposition}
		\label{prop:toy_reachability} 
		If the reference state $\ket{\psi_r}$ is prepared in such a way that the ray $\psi_r\ni\ket{\psi_r}$ has uniform probability under the normalised Riemannian volume form $d\Omega_{\psi_r}$ on the orbit $\mc{O}_{\psi_r}\ni\psi_*$, then the probability that $\im(\Gamma^{\psi_r})$ contains $\psi_*$ is non-zero if and only if  $\max_{\vtheta\in\Theta}\rank(\mbf{F}^*(\vtheta))=\dim\mc{O}_{\psi_r}$, where $\mbf{F}^*(\vtheta)$ is the Quantum Fisher Information Matrix of the map $\Gamma^*:\Theta\to\mc{O}_{\psi_r}$.  
	\end{proposition}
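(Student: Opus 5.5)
The plan is to turn the probability in the preceding display into a statement about the Riemannian measure of $\im(\Gamma^*)$ in $\mc{O}_{\psi_r}$, and then decide when that image has positive measure using the rank of the differential. First I will justify the reduction. The ray $\psi_*$ lies in $\im(\Gamma^{\psi_r})$ exactly when $\psi_r = U^\dagger(\vtheta)\psi_*$ for some $\vtheta\in\Theta$, i.e.\ exactly when $\psi_r\in\im(\Gamma^*)$. Hence the probability equals $V^{-1}\int_{\im(\Gamma^*)}d\Omega_{\mc{O}_{\psi_r}}$. Since $\Theta$ is a (closure of a) bounded box and $\Gamma^*$ is continuous, the image is compact, hence measurable. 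So the claim reduces to showing that $\im(\Gamma^*)$ has positive Riemannian volume if and only if $\max_{\vtheta}\rank(\mbf{F}^*(\vtheta))=\dim\mc{O}_{\psi_r}$.

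Next I transfer the rank condition from the QFIM to the Jacobian. Exactly as in the discussion preceding the proposition, $\mbf{F}^*(\vtheta)=4\,\mbf{J}_*^T\mbf{g}\,\mbf{J}_*$ with $\mbf{g}$ positive definite, so $\rank\mbf{F}^*(\vtheta)=\rank d\Gamma^*_{\vtheta}$. The condition is therefore that $d\Gamma^*_{\vtheta_0}$ is surjective onto $T\mc{O}_{\psi_r}$ at some $\vtheta_0$.

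For sufficiency, suppose $d\Gamma^*_{\vtheta_0}$ is surjective. If $\vtheta_0$ lies on the boundary of $\Theta$, I will first move it into the interior. Lower semicontinuity of rank makes full rank an open condition, so a nearby interior point also has full rank; alternatively, periodicity lets me shift the principal range. At an interior full-rank point, $\Gamma^*$ is a submersion near $\vtheta_0$. I will choose $\dim\mc{O}_{\psi_r}$ coordinate directions on which the Jacobian restricts to an invertible matrix and apply the inverse function theorem to the restriction of $\Gamma^*$ to that affine slice. This shows the image contains an open neighbourhood of $\Gamma^*(\vtheta_0)$ in $\mc{O}_{\psi_r}$. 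Open sets have positive volume because the density $\sqrt{\det\mbf{g}}$ is strictly positive, so the probability is non-zero.

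For necessity, suppose $\rank d\Gamma^*_{\vtheta}<\dim\mc{O}_{\psi_r}$ for all $\vtheta$. Then every point of $\Theta$ is a critical point, and $\im(\Gamma^*)$ consists entirely of critical values. By Sard's theorem, applied to the smooth map $\Gamma^*$ (extended smoothly to an open neighbourhood of $\Theta$ in $\mathbb{R}^d$, which is available since it is built from matrix exponentials), the critical values have measure zero in any chart. Since $d\Omega$ is absolutely continuous with respect to Lebesgue measure in charts, the image has zero Riemannian volume, and the probability vanishes.

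I expect the main obstacle to be this measure-theoretic step: Sard's theorem must be applied in the regime $d$ possibly larger than $\dim\mc{O}_{\psi_r}$. That regime needs $C^k$ smoothness with $k$ large, which is met here because the map is real analytic. I also need the chart-wise null sets to patch into a global null set, which follows from second countability of $\mc{O}_{\psi_r}$. The boundary issue in the sufficiency direction is minor by comparison.
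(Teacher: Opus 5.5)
Your proof is correct and follows the same route as the paper. Necessity comes from Sard's theorem: if $\rank d\Gamma^*_{\vtheta}<\dim\mc{O}_{\psi_r}$ everywhere, the image consists of critical values and is null. Sufficiency comes from the local submersion theorem combined with strict positivity of the Riemannian volume form on open sets. Your extra care about boundary points of $\Theta$, measurability of the image, and the smoothness Sard requires when $d>\dim\mc{O}_{\psi_r}$ fills in details the paper leaves implicit.
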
 
	\noindent 
	The forward implication follows from the strict positivity of the Riemannian volume form, while the reverse implication follows from criticality of the image for non-surjective smooth maps via Sard's theorem~\cite{Sard1942-yn,Lee2012}. A detailed proof is provided in Appendix~\ref{app:toy_reachability}. Since the genericity of rank on connected domains for analytic maps and the criticality of the image for non-surjective smooth maps will be utilised repeatedly, a more detailed explanation is provided in Appendix~\ref{app:gen_ranks}.

	Recalling that the QFIM rank is generic but not constant in $\Theta$, it is important to distinguish the ranks of  $\mbf{F}^*(\vtheta)$ and $\mbf{F}(\vtheta)$, and the dependence of the proposition on maximality of $\rank(\mbf{F}^*(\vtheta))$. This is because for a particular $d$ and $U(\vtheta)$ the circuit can have $\rank(\mbf{F}(\vtheta))=\dim\mc{O}_{\psi_r}$ for almost every $\theta\in\Theta\subset\mathbb{R}^d$ but still have $\rank(\mbf{F}^*(\vtheta))<\dim\mc{O}_{\psi_r}$.  As a concrete example on the Bloch sphere (i.e. $\mathbb{C}P^1$), consider ${\psi^*}=\{e^{i\lambda}\ket{0}:\lambda\in\mathbb{R}\}$, and the unitary $U(\theta_y,\theta_z)=e^{i\,\frac{\sigma_z}{2}\theta_z}\;e^{i\,\frac{\sigma_y}{2}\theta_y}$. We see that $\max_{\vtheta\in\Theta}\rank(\mbf{F}^*(\vtheta))=1$ while $\max_{\vtheta\in\Theta}\rank(\mbf{F}(\vtheta))=2$. Therefore, the set of reference states that can reach $\psi_*$ under $U(\theta_y,\theta_z)$ corresponds to those with $\theta_y=0$, which is the great circle passing through the $Z$-axis and therefore has zero Fubini-Study volume on $\mathbb{C}P^1$.

	As a result of the proposition, overparametrisation is therefore necessary but not sufficient for a non-zero probability of reaching a unique target state that is contained in $\mc{O}_{\psi_r}$. Heuristically, this is because one can begin to \emph{quantify} reachable volumes in the orbit only after rank saturation of $\mbf{F}(\vtheta)$. Unless~\cite{Onishchik1963,1220755} $\mf{g}\cong\mf{su}(N)$ or $\mf{g}\cong\mf{sp}(N/2)$, the $\G$-action is not transitive and $\mc{O}_{\psi_r}$ is a closed embedded submanifold of $\M_N$ and is strictly lower dimensional $\dim\mc{O}_{\psi_r}<\dim\M_N$. Additionally, this can happen even when $\dim\mf{g}>\dim\M_N$, making it a stringent and exact dimensional bound on asymptotic reachability as $d\to\infty$. Thus, the \emph{maximal orbits} in $\M_N$, quantify the maximum reachability of any non-universal VQC. As we shall see, these maximal orbits are encountered for generically prepared states making it relatively easy to access the maximum reachability of non-universal VQCs. 
	\section{Generic geometry on the pure state manifold} 	
	\label{sec:gen_geom}
	In evaluating the reachability probability within a single orbit, we assumed that there was a single unique state. From a mathematical perspective, this was highly structured and it is not in general given, that we know the exact orbit of a target state. In this section, we scrutinise such behaviour by looking into the volume of states in the pure state manifold that can reach a set of targets. This is made possible by the principal orbit-type theorem and the resulting genericity of the principal orbits. 
	
	\subsection{Genericity of maximal orbits} 
	
	In general, any orbit $\mc{O}_{\psi}$ of a smooth Lie group action is a homogeneous space and inherits a smooth structure from $\G$ via the quotient $\G/\mc{S}_{\psi}$, where $\mc{S}_{\psi}$ is the isotropy subgroup of $\mc{O}_{\psi}$. For any two distinct orbits $\mc{O}_{\psi_1}$ and $\mc{O}_{\psi_2}$, if $\mc{S}_{\psi_1}$ and $\mc{S}_{\psi_2}$ are conjugate, i.e. $\mc{S}_{\psi_1}=g\,\mc{S}_{\psi_2}g^{-1}$ for some $g\in\G$, then the two orbits are naturally diffeomorphic. Thus, the isotropy subgroup (up to conjugacy) determines the \emph{orbit type}. More importantly each \emph{connected} subset of $\M_{(\mc{S})}$ consisting of all points in $\M_N$ that have a conjugate copy of $\mc{S}$ as its isotropy group forms an embedded submanifold, and these components form a \emph{stratification} of $\M_N$ called the \emph{orbit-type stratification}. Consequently, (i) there is a maximal orbit type $\mc{O}_0\cong\G/\mc{S}_0$ called the \emph{principal orbit-type} characterised by $\mc{S}_0$ the smallest possible isotropy subgroup called the principal isotropy type, (ii) the union of all such orbits in $\M_N$, say $\M_0$, forms an \emph{open, dense}\footnote{Recall that under a topology $\tau$ on a set $X$, a set $D\subset X$ is dense in $X$ if and only if every non-empty open neighbourhood $V\in\tau$ contains a non-empty intersection with $D$.} subset of $\M_N$ and (iii) the complement $\bar{\M}_0=\M_N\setminus\M_0$ is a strictly lower dimensional subset of measure zero in the Riemannian volume form in $\M_N$. Thus, for the Fubini-Study volume form $d\Omega_{\rm FS}$, $\Omega_{\rm FS}(\bar{\M}_0)=\int_{\bar{\M}_0}\;d\Omega_{\rm FS}=0$ and $\Omega_{\rm FS}(\M_0)=\int_{\M_0}\;d\Omega_{\rm FS}=1$. Therefore, under the application of deep VQCs of the form $U(\vtheta)$, the generically encountered orbit-type is also maximal and therefore
	\begin{equation*}
		R(\psi)=\max_{\psi_r\in\M_N}R(\psi_r)=\dim\mc{O}_0\quad a.e. \text{ in } \M_N\quad.
	\end{equation*}
	Thus, without explicitly evaluating the dynamical Lie algebra, the dimensions of the principal orbit can be evaluated as the saturated rank of a VQC on a generically sampled reference state.

	\subsection{Reachability under generic state preparation} 
	While there is no assured reachability for non-universal ans\"atze, we can now derive necessary and sufficient conditions for nonzero probabilistic reachability of $\im(f)$ under generically sampled states via $d\mu_{\rm FS}$. Define the map $\Sigma:\G\times \M^*\to\M_N$ as the composition $\Sigma=\mbf{A}\circ(\text{id}_\G\times f)$ of the smooth $\G$-action $\mbf{A}:\G\times\M_N\to\M_N$ on $\M_N$. Thus, $\mc{K}_*\equiv\im(\Sigma)$ denotes the subset of $\M_N$ from which the solution manifold $\im(f)$ is reachable under the $\G$-action.  For a non-zero probability of reaching some state in $\im(f)$ under a generically prepared reference state,  $\mc{K}_*$ should have non-zero Fubini-Study measure. 
	
	Define the set  $S_0=\im(f)\cap\M_0$, as the intersection of the image of $f$ with the principal component $M_0$. The following lemma states the local equivalence of surjectivity of the map $\Sigma:\G\times\M^*\to\M_N$ to the restriction of $f$ to the preimage of $S_0$ in $\M^*$, say $\tilde{\Sigma}:\G\times f^{-1}(S_0)\to\M_N$. 
	\begin{lemma}
		\label{lemma:surj_equiv} 
		An open neighbourhood $V$  in $\G\times\M^*$ contains a point $(g,p)$ at which $\rank(d\Sigma_{(g,p)})=\dim(\M_N)$ if and only if $V\cap(\G\times f^{-1}(S_0))$ contains a point $(g',p')$ such that $\rank(d\tilde{\Sigma}_{(g',p')})=\dim(\M_N)$. 
	\end{lemma}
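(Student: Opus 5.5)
The plan rests on two observations. First, $f^{-1}(S_0)=f^{-1}(\M_0)$ is open in $\M^*$, because $\M_0$ is open in $\M_N$ and $f$ is continuous. Hence $\G\times f^{-1}(S_0)$ is an open subset of $\G\times\M^*$, and $\tilde{\Sigma}$ is simply the restriction of $\Sigma$ to this open set. Second, $\M_0$ is $\G$-invariant, since it is a union of orbits of principal type.

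Because $\tilde{\Sigma}$ is a restriction to an open set, at every point $(g',p')$ of $\G\times f^{-1}(S_0)$ we have $T_{(g',p')}(\G\times f^{-1}(S_0))=T_{(g',p')}(\G\times\M^*)$ and $d\tilde{\Sigma}_{(g',p')}=d\Sigma_{(g',p')}$. The reverse implication is then immediate: a point $(g',p')\in V\cap(\G\times f^{-1}(S_0))$ with $\rank(d\tilde{\Sigma}_{(g',p')})=\dim\M_N$ is itself a point of $V$ at which $d\Sigma$ has full rank.

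For the forward implication, suppose $(g,p)\in V$ with $\rank(d\Sigma_{(g,p)})=\dim\M_N$.
\begin{enumerate}
\item Full rank is an open condition, by lower semicontinuity of the rank of the smoothly varying Jacobian (a nonvanishing maximal minor stays nonvanishing nearby). So there is an open $W\subset V$ containing $(g,p)$ on which $\Sigma$ is a submersion.
\item Submersions are open maps, so $\Sigma(W)$ is a non-empty open subset of $\M_N$.
\item By the principal orbit-type theorem, $\M_0$ is dense, so $\Sigma(W)\cap\M_0\neq\emptyset$. Choose $(g'',p'')\in W$ with $\Sigma(g'',p'')=\mbf{A}(g'',f(p''))\in\M_0$.
\item By $\G$-invariance of $\M_0$, it follows that $f(p'')=\mbf{A}(g''^{-1},\Sigma(g'',p''))\in\M_0$. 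Hence $p''\in f^{-1}(S_0)$.
\item Therefore $(g'',p'')\in V\cap(\G\times f^{-1}(S_0))$. Since $(g'',p'')\in W$, we get $\rank(d\tilde{\Sigma}_{(g'',p'')})=\rank(d\Sigma_{(g'',p'')})=\dim\M_N$.
\end{enumerate}

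The main obstacle is the forward direction. A priori, $f$ could map a whole neighbourhood of $p$ into the singular set $\bar{\M}_0$, in which case $f^{-1}(S_0)$ would not meet a neighbourhood of $p$. This is exactly what steps 2–4 rule out, by combining openness of submersions, density of $\M_0$, and its $\G$-invariance. The remaining steps are routine.
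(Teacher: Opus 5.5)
Your proposal is correct and follows essentially the same route as the paper: openness of $f^{-1}(S_0)=f^{-1}(\M_0)$ makes $\tilde{\Sigma}$ a restriction of $\Sigma$ to an open set, the reverse direction holds by inclusion, and the forward direction uses density of $\M_0$. You also make explicit two steps that the paper's sketch leaves implicit: full rank persists on a neighbourhood whose image under the submersion is open, and $\G$-invariance of $\M_0$ carries $\Sigma(g'',p'')\in\M_0$ back to $f(p'')\in\M_0$.
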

	\noindent 
	The proof follows since $\M_0$ is open and dense in $\M_N$, making $\tilde{\Sigma}$ smooth and every open neighbourhood in $\M_N$ contains an element of $\M_0$, respectively. A more detailed proof is provided in Appendix~\ref{app:surj_equiv}. Motivated by this lemma, we state the following theorem which captures the necessary and sufficient conditions on when $\Omega_{\rm FS}(\mc{K}_*)>0$.
	\begin{theorem}
		\label{theorem:suff_nec} 
		If $d\Omega_{\rm FS}$ is the Fubini-Study measure on $\M_N$, then $\Omega_{\rm FS}(\mc{K}_*)=\Omega_{\rm FS}(\mc{K}_*^0)$, where $\mc{K}_*^0=\mc{K}_*\cap\M_0$. Thereby, $\Omega_{\rm FS}(\mc{K}_*)>0$ if and only if the restricted map $\tilde{\Sigma}:\G\times f^{-1}(S_0)\to\M_N$, where $S_0=\im(f)\cap\M_0$, is locally surjective for at least one element in the domain, i.e. there is some $(g,p)\in\G\times f^{-1}(S_0)$ such that $\rank(d\tilde{\Sigma}_{(g,p)})=\dim\M_N$. 
	\end{theorem}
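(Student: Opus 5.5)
The first claim, $\Omega_{\rm FS}(\mc{K}_*)=\Omega_{\rm FS}(\mc{K}_*^0)$, follows from the genericity of principal orbits. Since $\mc{K}_*=\mc{K}_*^0\sqcup(\mc{K}_*\cap\bar{\M}_0)$ and $\Omega_{\rm FS}(\bar{\M}_0)=0$, additivity gives the equality at once. One point needs checking first: $\mc{K}_*$ must be measurable. Because $\G$ is compact, we can cover $\M^*$ (assumed second countable) by countably many compact sets. Then $\mc{K}_*=\im(\Sigma)$ is a countable union of compact images, hence $\sigma$-compact and Borel. Every later measure statement can therefore be made about $\mc{K}_*^0$, and the natural object to study is $\tilde{\Sigma}$.

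The key structural observation concerns $\M_0$, which is $\G$-invariant as a union of orbits. If $p\in f^{-1}(S_0)$, then $\Sigma(g,p)=g\cdot f(p)\in\M_0$. Conversely, if $\Sigma(g,p)\in\M_0$, then $f(p)=g^{-1}\cdot\Sigma(g,p)\in\M_0$. Hence $\mc{K}_*^0=\im(\tilde{\Sigma})$ and $\Sigma^{-1}(\M_0)=\G\times f^{-1}(S_0)$. The latter set is open, since $f$ and $\Sigma$ are continuous and $\M_0$ is open. So $\G\times f^{-1}(S_0)$ is an open submanifold of $\G\times\M^*$, and $\tilde{\Sigma}$ is a smooth map between manifolds whose differential coincides with $d\Sigma$ there.

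For the ``if'' direction, suppose $\rank(d\tilde{\Sigma}_{(g,p)})=\dim\M_N$. Then $\tilde{\Sigma}$ is a submersion at $(g,p)$. By the local submersion theorem, its image contains an open neighbourhood of $\tilde{\Sigma}(g,p)$ in $\M_N$. The Fubini–Study volume form is strictly positive, so this neighbourhood has positive measure, giving $\Omega_{\rm FS}(\mc{K}_*^0)>0$. This mirrors the forward direction of Proposition~\ref{prop:toy_reachability}.

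For the ``only if'' direction, suppose instead that $\rank(d\tilde{\Sigma})<\dim\M_N$ everywhere on the open domain $\G\times f^{-1}(S_0)$. Then every point of the domain is critical, so $\im(\tilde{\Sigma})=\mc{K}_*^0$ consists entirely of critical values. Sard's theorem, as used in Appendix~\ref{app:gen_ranks}, then gives $\Omega_{\rm FS}(\mc{K}_*^0)=0$, and hence $\Omega_{\rm FS}(\mc{K}_*)=0$.

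This is where care is required. Sard needs a second countable smooth domain, which $\G\times f^{-1}(S_0)$ is as an open subset of a manifold. It also needs the maps to be $C^k$ with $k$ large enough; smoothness of $f$ and of the action supplies this. Lemma~\ref{lemma:surj_equiv} can optionally be invoked to reconcile this with rank conditions on $\Sigma$ itself, but with the set identity above it is not strictly needed.

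I expect the main obstacle to be the bookkeeping in the first step rather than the analysis. One must make the $\G$-invariance argument explicit, so that $\mc{K}_*^0$ is exactly the image of $\tilde{\Sigma}$. One must also handle the degenerate case $S_0=\emptyset$: there the restricted domain is empty, the condition fails vacuously, and $\mc{K}_*\subset\bar{\M}_0$ has measure zero, which is consistent with the theorem.
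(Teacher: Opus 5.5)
Your proof is correct and shares the paper's skeleton. You split $\mc{K}_*$ according to whether the orbit is principal, discard the non-principal part using $\Omega_{\rm FS}(\bar{\M}_0)=0$, and get sufficiency from the local submersion theorem plus strict positivity of the Fubini--Study measure.

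You differ from the paper in the necessity direction. The paper uses Lemma~\ref{lemma:surj_equiv} to upgrade ``$\tilde{\Sigma}$ is nowhere locally surjective'' to ``$\Sigma$ is nowhere locally surjective,'' and then applies Sard to $\Sigma$ on all of $\G\times\M^*$. Its equivalence therefore never uses the identity $\Omega_{\rm FS}(\mc{K}_*)=\Omega_{\rm FS}(\mc{K}_*^0)$. You instead use $\G$-invariance of $\M_0$ to get $\Sigma^{-1}(\M_0)=\G\times f^{-1}(S_0)$ and $\im(\tilde{\Sigma})=\mc{K}_*^0$. You then apply Sard directly to $\tilde{\Sigma}$ on that open domain and finish with the measure identity.

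Your route buys three things. It makes the ``Thereby'' in the statement literal. It needs only the nullity of $\bar{\M}_0$, already used for the first claim, rather than the density argument behind Lemma~\ref{lemma:surj_equiv}. And it spells out the invariance and measurability facts the paper leaves implicit; the forward direction of Lemma~\ref{lemma:surj_equiv} silently needs the same invariance, plus openness of submersions, to place a full-rank point inside $\G\times f^{-1}(S_0)$.

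The paper's route buys the local statement that any open set in $\G\times\M^*$ containing a full-rank point of $\Sigma$ also contains a full-rank point of $\tilde{\Sigma}$. It reuses this in Theorem~\ref{theorem:gen_analytic} to show that $S_0$ is dense in $\im(f)$, which your measure-level argument would not give by itself.
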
      
	\noindent 
	Heuristically, the proof follows from the equivalence of surjectivity of $\tilde{\Sigma}$ with $\Sigma$ (from the above lemma) and $\Omega_{\rm FS}(\bar{\M_0})=0$. A detailed proof is provided in Appendix~\ref{app:suff_nec}. 
	
	Note that the theorem only concerns the nullity of the set $\mc{K}_*$ under the Riemannian volume form $d\Omega_{\rm FS}$. For conditions on whether $\Omega_{\rm FS}(\mc{K}_{*})=1$, i.e. $\mc{K}_*$ is a set of full measure in $\M_N$ for non-universal VQCs, note that $\dim\mc{O}_0<\dim\M_N$. In this case, knowledge of the dimension alone cannot lead to a statement about $\mc{K}_*$ being a full measure set since it needs to intersect almost every principal orbit in $\mc{M}_0$. However, by additionally assuming that $f:\M^*\to\M_N$ is an analytic embedding, we can further quantify the range of $S_0$ in $\im(f)$.
	
	\subsection{Generic rank on analytic solution manifolds}
	Since the group action is \emph{real analytic}, we can further refine \emph{local surjectivity} on the domain of $\Sigma$ leading to important implications on the co-domain $\M_N$. Analytic solution manifolds are encountered for instance, in variational quantum eigensolvers where the target eigensubspace's degeneracy $m$, makes it $\M^*\cong\mathbb{C}P^{m-1}$. In such cases, due to the additional real analyticity, we can obtain the following stronger result:
	\begin{theorem}
		\label{theorem:gen_analytic}
		Let $f:\M^*\to\M_N$ be the real analytic embedding of a compact connected Riemannian manifold $\M^*$, $R_0\equiv\{ f(p)\in \im(f):\rank(d\Sigma_{(e,p)})=r_{\rm max}\}$, with $r_{\rm max}=\min(\dim\mc{O}_0+\dim\M^*,\dim\M_N)$, and $d\mu_*$ be any probability measure that is absolutely continuous with respect to the induced Riemannian measure on $\im(f)$. 
		\begin{itemize}
			\item if $S_0\neq\emptyset$ then $\mu_*(S_0\cup S_{E})=1$ where $S_{E}=\im(f)\cap\M_{E}$ with $\M_{E}$ is the union of all non-principal orbits of dimensions equal to $\dim\mc{O}_0$ in $\M_N$
			\item if $\dim\M_N<\dim\mc{O}_0+\dim\M^*$ and $R_0\neq\emptyset$ then $\mu_*(R_0)=1$, $\Omega_{\rm FS}(\mc{K}_*)>0$, and $S_0$ is dense in $\im(f)$. 
		\end{itemize}
	\end{theorem}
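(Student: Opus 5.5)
The plan is to reduce both bullets to the principle that a real-analytic function on a connected real-analytic manifold which is not identically zero vanishes only on a nowhere-dense set of measure zero, as recalled in Appendix~\ref{app:gen_ranks}. The manifold here is $\M^*$, carrying the analytic structure for which $f$ is an analytic embedding. The functions will be minors of suitable Jacobians along $\im(f)$. Since $\mu_*$ is absolutely continuous with respect to the induced Riemannian measure, every Riemannian-null subset of $\im(f)$ is also $\mu_*$-null. So it suffices to show that the relevant complements are analytic subvarieties that are not all of $\M^*$.

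\emph{First bullet.} Fix a basis $X_1,\dots,X_{\dim\mf{g}}$ of $\mf{g}$ and let $\hat X_i$ be the induced fundamental vector fields on $\M_N$. The action is real analytic, so the $\hat X_i$ are analytic, and so is $p\mapsto \hat X_i(f(p))$ in local analytic charts. The orbit dimension at $f(p)$ is the rank of the matrix $[\hat X_1(f(p)),\dots,\hat X_{\dim\mf g}(f(p))]$. This rank never exceeds $\dim\mc O_0$, since principal orbits have maximal dimension. Consider the analytic function on $\M^*$ given by the sum of squares of all $\dim\mc O_0\times\dim\mc O_0$ minors, or its local chart versions. If $S_0\neq\emptyset$, this function is nonzero at some point of $f^{-1}(S_0)$. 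By connectedness of $\M^*$ and the identity theorem, its zero set $Z$ is then a proper analytic subset, hence of measure zero. Off $Z$ the orbit through $f(p)$ has dimension $\dim\mc O_0$, so $f(p)\in\M_0\cup\M_E$. Therefore $\mu_*(S_0\cup S_E)\ge \mu_*(\im(f)\setminus f(Z))=1$.

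\emph{Second bullet.} Use the equivariance $\Sigma(g,p)=g\cdot\Sigma(e,p)$. It gives $\rank d\Sigma_{(g,p)}=\rank d\Sigma_{(e,p)}$. We also have $\im d\Sigma_{(e,p)}=\mathrm{span}\{\hat X_i(f(p))\}+df_p(T_p\M^*)$, which depends analytically on $p$. Under the hypothesis $\dim\M_N<\dim\mc O_0+\dim\M^*$ we get $r_{\max}=\dim\M_N$. The same minor argument applies, now to the $\dim\M_N$-minors of the matrix formed by the $\hat X_i(f(p))$ together with $df_p$ in coordinates. It shows that if $R_0\ne\emptyset$, the complement of $f^{-1}(R_0)$ is a proper analytic subset, so $\mu_*(R_0)=1$ and $R_0$ is open and dense in $\im(f)$.

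To get $\Omega_{\rm FS}(\mc K_*)>0$, pick any $p$ with $f(p)\in R_0$. Then $\Sigma$ is a submersion near $(e,p)$, hence an open map there, so $\mc K_*$ contains a nonempty open set of $\M_N$.

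For density of $S_0$: the set $R_0$ is open and dense in $\im(f)$. At each point of $R_0$, $\Sigma$ is a submersion, so its image near $f(p)$ contains an open set $W$ of $\M_N$. Since $\M_0$ is open and dense, $W\cap\M_0\neq\emptyset$. By Lemma~\ref{lemma:surj_equiv}, Theorem~\ref{theorem:suff_nec} then gives a nonempty $S_0$. The first bullet then shows that the set where orbit dimension is maximal is generic, but this alone yields only $S_0\cup S_E$ dense.

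The main obstacle is excluding $S_E$ from carrying the density. I would first try to show $S_0\neq\emptyset$ and then use that $\M_0$ is open, so $S_0$ is relatively open in $\im(f)$. To upgrade to density, I would argue locally, near any $f(p)$ with $p\in f^{-1}(R_0)$. Submersivity of $\Sigma$ means $\G\cdot f(U)$ contains a neighbourhood of $f(p)$ for small open $U\ni p$. That neighbourhood meets $\M_0$, and $\M_0$ is $\G$-invariant, so $f(U)$ itself meets $\M_0$. Hence every open subset of $f^{-1}(R_0)$, and by density of $R_0$ every open subset of $\M^*$, contains points of $f^{-1}(S_0)$. The $\G$-invariance of $\M_0$ is the key step that makes this work, and I would check it carefully.
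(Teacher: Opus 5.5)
Your proof is correct and follows the paper's route. Both bullets rest on generic rank of analytic minors on the connected $\M^*$ (Lemma~\ref{lemma:analyt_gen_rank}), and the density of $S_0$ comes from local submersivity of $\Sigma$ at points of $R_0$, which is what the paper packages as Lemma~\ref{lemma:surj_equiv}. Your explicit version of that step is actually cleaner than the paper's, and the $\G$-invariance you flag is immediate because $\M_0$ is by definition a union of orbits, so $g\cdot f(q)\in\M_0$ forces $f(q)\in\M_0$.
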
  
	\noindent 
	In addition to the orbit-type stratification, the proof relies on the well known result ~\cite{Mityagin2020} that analytic functions in a connected domain are either identically zero or are non-zero almost everywhere, and the fact that $R_0$ is dense in $\im(f)$  as it is a set of full measure under the strictly positive Fubini-Study measure on $\im(f)$, which thereby guarantees density of $S_0$ by lemma~\ref{lemma:surj_equiv}. 
	A detailed proof provided in Appendix~\ref{app:gen_analytic}.

	The first result quantifies that for an analytically embedded solution manifold, either it falls entirely in a singular subset, or it falls almost entirely in the regular component. This is independent of the local surjectivity of the map $\Sigma:\G\times\M^*\to\M_N$.  The second result states that when there is at least one state $\psi\in\im(f)$ where the map $\Sigma$ becomes locally surjective, then the principal orbits intersect the embedded manifold \emph{transversally}
	\begin{equation*}
		T_{\phi}\,\mc{O}_{0,\phi}+T_{\phi}\,\im(f)=T_{\phi\,}\M_0  
	\end{equation*}
	in a dense subset $\phi\in S_0$ of $\im(f)$. Here, $\mc{O}_{0,\phi}$ are principal orbits but are not necessarily the same for every $\phi$. While the measure of this set will be non-zero in $\im(f)$, it need not be full.

	\subsection{Dimensional obstruction to reachability} 
	While the additional assumptions of the theorem guarantee denseness of the set $S_0$, note that it requires $R_0\neq\emptyset$, beyond $\Delta D\geq0$. Thus, it still requires the exact knowledge of at least one point in the image of $f$, which is not a priori known in practice. To operationalise the theorems in the absence of exact knowledge of $\im(f)$, but only its dimensionality, we consider the sufficient conditions for when $d\Omega_{\rm FS}(\mc{K}_*)=0$. Since this is a statement on whether the set $\mc{K}_*$ has measure zero or not, it can be translated to any measure $d\mu$ on $\M_N$ that is \emph{absolutely continuous} with respect to $d\Omega_{\rm FS}$. 	An absolutely continuous measure $d\mu$ with respect to $d\Omega_{\rm FS}$ simply means that $d\Omega_{\rm FS}(A)=0\implies \mu(A)=0$ for every measurable subset $A\subset\M_N$. Conversely, $d\mu$ assigns non-zero measure to null sets under $d\Omega_{\rm FS}$ if it is not absolutely continuous.

	The requirement of local surjectivity on at least one point directly leads
	to a necessary condition on the dimensions via the differential pushforward $d\Sigma_{(g,p)}:T_g\G\oplus T_p\M^*\to T_{{\Sigma}(g,p)}\,\M_N$. At constant $g$, the second term of the direct sum decomposition pushes forward the manifold $\M^*$ therefore has at most $\dim(\M^*)$ rank. The same for the first term pushes forward the group action and a maximum rank of $\dim(\mc{O}_0)$. Since their image on the co-domain $T_{{\Sigma}(g,q)}\,\M_N$ may overlap forming a positive-dimensional vector subspace, $\rank(d{\Sigma}_{(g,p)})\leq\dim(\mc{O}_0)+\dim(\M^*)$. For local surjectivity we need $\rank(d{\Sigma}_{(g,p)})=\dim(\M_N)$ and therefore $\dim(\M^*)\geq \dim(\M_N)-\dim(\mc{O}_0)$. We get the important contraposition that captures a practical necessary condition of reachability under generic state preparation. 
	\begin{corollary}
		\label{cor:dim_obs} 
		If the reference state $\ket{\psi_r}$ is sampled with a probability measure $d\mu$ that is absolutely continuous with respect to the Fubini-Study measure on $\M_N$,  the probability that the range of reachable states via a Variational Quantum Circuit $U(\vtheta)$ is zero as long as $$\Delta D\equiv\dim\M^*-(\dim\M_N-\dim\mc{O}_0)<0\quad,$$
		where $\mc{O}_0\cong\G/\mc{S}_0$ specifies the principal orbits of the associated dynamical Lie group $\G$ of $U(\vtheta)$.
	\end{corollary}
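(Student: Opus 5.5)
The plan is to prove the contrapositive of Theorem~\ref{theorem:suff_nec}: under $\Delta D<0$ we show $\Omega_{\rm FS}(\mc{K}_*)=0$, and then transfer this to $\mu$ by absolute continuity. The set of reference rays from which some state of $\im(f)$ is reachable by $U(\vtheta)$, for any $\vtheta$ and any depth, is contained in $\mc{K}_*=\im(\Sigma)$. This holds because $\im(U)\subset\mc{U}(\G)$ and $\psi_r=\mbf{A}(g^{-1},f(p))$ whenever $\mbf{A}(g,\psi_r)=f(p)$, and $g^{-1}$ ranges over $\G$ as $g$ does. Hence it suffices to show $\mu(\mc{K}_*)=0$, and by absolute continuity it suffices to show $\Omega_{\rm FS}(\mc{K}_*)=0$.

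First I bound the rank of the differential. At any $(g,p)\in\G\times\M^*$, split $d\Sigma_{(g,p)}$ into its two partial differentials. The $T_p\M^*$ part equals $d\mbf{A}_g\circ df_p$, whose rank is at most $\dim\M^*$. The $T_g\G$ part is the differential of the orbit map of $f(p)$ at $g$, and its image is $T_{\Sigma(g,p)}\mc{O}_{f(p)}$. Every orbit has $\dim\mc{O}_{f(p)}=\dim\G-\dim\mc{S}_{f(p)}\le\dim\mc{O}_0$. To justify this I will use that, by the principal orbit-type theorem, each isotropy group contains a conjugate of $\mc{S}_0$, so $\dim\mc{S}_{f(p)}\ge\dim\mc{S}_0$. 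Since the rank of a sum of linear maps is at most the sum of the ranks, $\rank(d\Sigma_{(g,p)})\le\dim\M^*+\dim\mc{O}_0<\dim\M_N$ whenever $\Delta D<0$.

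Next I apply Sard's theorem. Every point of $\G\times\M^*$ is then a critical point of the smooth map $\Sigma$, so $\mc{K}_*=\im(\Sigma)$ consists entirely of critical values. These have Lebesgue measure zero in every chart, and hence Fubini-Study measure zero. This is the same mechanism already used for the reverse implication of Proposition~\ref{prop:toy_reachability} and summarised in Appendix~\ref{app:gen_ranks}. Alternatively, the conclusion follows directly from Theorem~\ref{theorem:suff_nec}, since $\tilde\Sigma$ is a restriction of $\Sigma$ and therefore cannot attain rank $\dim\M_N$ anywhere. In either form we get $\Omega_{\rm FS}(\mc{K}_*)=0$, hence $\mu(\mc{K}_*)=0$, which gives the claim.

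The main obstacle is justifying $\dim\mc{O}_\psi\le\dim\mc{O}_0$ for non-principal, including exceptional, orbits. I would cite the principal orbit-type theorem's statement that $\mc{S}_0$ is subconjugate to every isotropy group. A smaller point is that Sard requires second countability of $\G\times\M^*$, which holds since both are manifolds and $\G$ is compact.
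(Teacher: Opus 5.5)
Your proposal is correct and follows the paper's own argument. The paper bounds $\rank(d\Sigma_{(g,p)})\le\dim\M^*+\dim\mc{O}_0<\dim\M_N$ and invokes the contrapositive of Theorem~\ref{theorem:suff_nec}, whose reverse direction is exactly your Sard step; it then transfers $\Omega_{\rm FS}(\mc{K}_*)=0$ to $\mu$ by absolute continuity. Your justification of $\dim\mc{O}_{f(p)}\le\dim\mc{O}_0$ via subconjugacy of $\mc{S}_0$ into every isotropy group supplies a step the paper only asserts.
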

	\noindent
	Therefore, when $\Delta D<0$, beyond the dimensions of $\im(f)$, one requires additional information of where it is located in $\M_N$ so that the probability of the $\mc{O}_{\psi_r}$ intersecting $\im(f)$ is non-zero. In other words, one needs to prepare $\ket{\psi_r}$ with a distribution $d\mu$ that assigns non-zero probability to lower dimensional subsets $A\subset\M_N$, where $\Omega_{\rm FS}(A)=0$.

	\section{Reachability under sector-transitive actions} 
	In the previous section, we left the DLA, $\mf{g}$, and hence the group $\mc{G}$ arbitrary. In this section, by restricting the generator sequence $\mbf{G}$ to consist of Pauli strings, we derive \emph{sufficient conditions} under which a generically prepared reference state's orbit would \emph{almost surely} contain a non-zero intersection with the solution set $\im(f)$. This is achieved by connecting the manifold geometry to the algebraic structure of the unitary representation and its linear action on the Hilbert space.  
	
	\subsection{Structure of the group action} 
	Let $\mf{R}(\mf{g})$ be the representation of the Lie algebra $\mf{g}$ on $\mf{gl}(N,\mathbb{C})$ consisting of anti-Hermitian matrices. If one can find a normal matrix $C\in\mf{gl}(N,\mathbb{C})$ with $K\geq 1$ distinct eigenvalues that commutes with every element $X\in\mf{R}(\mf{g})$, every $X$ decomposes into a direct sum of the form
	\begin{equation}
		\label{eq:inv_dec} 
		X=\bigoplus_{a=1}^{K} X_{a}\quad,
	\end{equation} 
	where $X_a$ are $d_a\times d_a$ anti-Hermitian matrices acting on some $d_a$-dimensional invariant subspace $\mc{H}_{a}\cong\mathbb{C}^{d_a}$ of $\mc{H}_{N}$, and $\sum_{a=1}^K\, d_a=N$. Therefore, under exponentiation, we have 
	\begin{equation*}
		\exp\left(\bigoplus_{a=1}^K X_a\right)=\bigoplus_{a=1}^K\exp(X_a)\quad.
	\end{equation*}
	If each block faithfully represents some Lie algebra $\mf{g}_a$ which generates a connected Lie group $\G_a$, the group $\G$ decomposes  as the product  
	\begin{equation*}
		\G=\G_1\times\G_2\times...\times \G_K\quad.
	\end{equation*}
	The unitary action of each $\G_a$ is closed on the invariant subspace $\mc{H}_a$.

	To elucidate the induced orbit geometry on each subspace and how it relates to the orbit in $\M_N$, consider the completeness relation of the individual projectors $$\sum_{a=1}^K\;P_a=\;\mathbf{I}_N$$
	on the Hilbert space $\mc{H}_N$, where $P_a=\sum_{j=1}^{d_a}\ket{\Lambda_{j,a}}\bra{\Lambda_{j,a}}$ is the rank $d_a$ projector to each $d_a$-dimensional subspace of eigenvectors. So that for any state $\ket{\psi}\in\mathcal{H}_N$, $\ket{\psi_a}\equiv \sum_{j}\ket{\Lambda_{j,a}}\braket{\Lambda_{j,a}}{\psi}$ is its component in $\mc{H}_a$. Note that the definition of the projector has no phase ambiguities, i.e. for any  $\ket{\Lambda'}=e^{i\lambda}\,\ket{\Lambda}$ we have  $\ket{\Lambda'}\bra{\Lambda'}=\ket{\Lambda}\bra{\Lambda}$ and therefore $\ket{\psi_a}$ does not have a phase ambiguity in its definition once the global phase convention of $\ket{\psi}$ is fixed. Thus, projected states in the subspace $\mc{H}_{a}$ are identified with $p_a\equiv|\braket{\psi_a}{\psi_a}|^2$ and the $\G_a$ action on $\mc{H}_a$ cannot change $p_a$.   Since the phases of these subspaces are physically relevant, the set of states that the unitary evolution can take a state $\ket{\psi_a}$  will be diffeomorphic to an embedded submanifold of $S^{2\,d_a-1}$. However, keeping all phases in the projection will introduce a physically redundant global phase which can be taken care of by taking a quotient of the entire product with $U(1)$.

	The transformations on $\mc{H}_a$ can therefore be equivalently written for each $p_a$ as a smooth $\G_a$-action of the form $\mbf{A}_a:\G_a\times S^{2\,d_a-1}_{p_a}\to S^{2\,d_a-1}_{p_a}$, where $S^{2 d_a-1}_p$  is an extrinsically embedded $2\,d_a-1$ dimensional sphere in $\mathbb{R}^{2\,d_a}$ with radius $\sqrt{p_a}$. Thus, for any state $\ket{\psi}\in\mathcal{H}$, under the invariant sector action of the representation, the orbit decomposes into a product manifold indexed by the weight vector $\mathbf{p}=(p_1,p_2,...,p_K)$ and $\sum_{a=1}^Kp_a=1$, as 
	\begin{equation*}
		\mc{O}_{\psi}(\mbf{p})=\mc{O}_{p_1}\times \mc{O}_{p_2}\times...\times \mc{O}_{p_K}
	\end{equation*}
	where each $\mc{O}_{p_a}$ is diffeomorphic to some submanifold of $S^{2\,d_a-1}$.  Note that while $\mbf{p}\in\mathbb{R}^K$, it has $K-1$ real degrees of freedom and takes up values in the $K-1$-dimensional probability simplex $\Delta^{K-1}$. Additionally, since all $p_a$ need not be non-zero, the factor can contain empty sets which vacuously satisfy the definition of a manifold.

	\subsection{Principal orbit-space under subspace transitive group actions}  
	
	From the preceding discussions, it is clear that a necessary condition for pure rays in $\M_N$ to belong to a principal orbit when all factors $\G_a$ act faithfully on their corresponding subspace $\mc{H}_a$, is for the probability vector $\mbf{p}$ to not have any zero elements. This is because when any $p_a=0$, the isotropy subgroup will contain the factor $\G_a$ which otherwise will be absent when $p_a>0$. If each $\G_a$ action on $S^{2\,d_a-1}_{p_a}$ is transitive, then this is necessary and sufficient. Thus, the principal orbit-space $\M_0/\G$, when each factor group acts transitively on the manifold $S^{2\,d_a-1}$, is diffeomorphic to the interior of the $K-1$ probability simplex $\M_0/\G\cong\Delta^{K-1}_+\cong\mathbb{R}^{K-1}$, where $$\Delta^{K-1}_{+}\equiv\{\mbf{p}\in\Delta^{K-1}:p_a>0\}$$ is the interior of the probability simplex $\Delta^{K-1}$. The principal component $\M_0$ can then be written as a diffeomorphism
	\begin{equation*}
		\M_0\cong \Delta^{K-1}_{+}\times \left(S^{2\,d_1-1}\times...\times S^{2\,d_K-1}\right)/U(1)
	\end{equation*}
	where the $U(1)$ quotient takes care of the redundant global phase. A straightforward dimensional counting 
	\begin{equation*}
		\dim\M_0=K-1+\sum_{a=1}^K (2\,d_a-1)-1=2(N-1)
	\end{equation*}
	reveals the equality of dimensions with $\M_N$. The principal orbits are therefore  $$\mc{O}_0\cong\left(S^{2\,d_1-1}\times...\times S^{2\,d_K-1}\right)/U(1)\quad,$$
	and have codimension $\dim\M_N-\dim\mc{O}_0=K-1$.

	Some possible $\G_a$ (and consequently $\mf{g}_a$) that have transitive group actions on $S^{2\,d-1}$ can be ascertained through the homogeneous space constructions:
	\begin{equation*}
		S^{2d-1}\cong \frac{U(d)}{U(d-1)}\cong \frac{SU(d)}{SU(d-1)}
	\end{equation*}
	for any $d$ and 
	\begin{equation*}
		S^{4m-1}\cong \frac{Sp(m)}{Sp(m-1)}\quad
	\end{equation*}
	for even $d=2m$ where $Sp(m)$ is the compact symplectic group. Thus, when each factor in $\G$ corresponds to one of these groups, the principal orbit space will be diffeomorphic to $\Delta^{K-1}_+$. While we primarily utilise this diffeomorphism for ground state reachability, it could in principle be utilised to engineer efficient search of the entire $\M_N$ with non-universal VQCs since $\M_0$ itself is dense in $\M_N$.

	\subsection{Almost sure reachability of ground states} 
	The previous characterisation of the principal orbit for subspace transitive $\G$  actions can now be utilised to derive sufficient conditions on when the ground state manifold can be reached from almost every state in $\M_N$ for a non-universal VQC. Recalling that under the smooth Lie group action on a compact manifold, the principal component is a \emph{fibre bundle}\footnote{In other words, $\M_0$ is the total space, $\M_0/\G$ is the base space and $\G/\mc{S}_0\cong\mc{O}_0$ is the typical fibre, and one therefore has a smooth projection $\pi:\M_0\to\M_0/\G$ from the total space to the base space whose typical fibre in $\M_0$ are the principal orbits.}, we can define the smooth projection $\pi:\M_0\to\M_0/\G$ and let $\M^*_0\equiv f^{-1}(S_0)$, so that we have the restriction map (see theorem~\ref{theorem:suff_nec}) $f_0:\M^*_0\to\M_0$. The map $\tilde{f}_0:\M^*_0\to\M_0/\G$ defined through the composition $\tilde{f}_0=\pi\circ f_0$ is therefore naturally smooth. Clearly, independent of the exact structure of the group and its relation to invariant subspaces, a sufficient condition so that $\Omega_{\rm FS}(\mc{K}_*)=1$ is that $\tilde{f}_0$ is surjective to $\M_0/\G$. However, without the additional structure on $\G$ and hence $\M_0$ and $\M_0/\G$ it is not so straightforward to  characterise surjectivity or a notion of a volume measure on the co-domain of the map $\tilde{f}_0:\M^*_0\to\M_0/\G$.

	Consider that the ground state subspace $\mc{H}_0$ has a non-trivial intersection with each invariant subspace $\mc{H}_a$, i.e. each $\mc{H}_a\cap\mc{H}_0$ forms a closed Hilbert subspace of $\mc{H}_N$ of positive complex dimension. Thus, there is at least one normalised vector $\ket{\psi_{*,a}}$ in  each $\mc{H}_a$ such that $\quantexp{H}{\psi_{*,a}}=E_0$ and over $a\in\{1,2,...,K\}$ they form a mutually orthonormal set. Therefore, the linear combination
	\begin{equation*}
		\ket{\psi_*}=\sum_{a=1}^{K}\, \sqrt{p_a}\,e^{i\lambda_a}\ket{\psi_{*,a}}\quad
	\end{equation*}
	with arbitrary values of $\mbf{p}=(p_1,p_2,...,p_{K-1})\in\Delta^{K-1}$, $p_K=1-\sum_{a=1}^{K-1} p_a$, and any $\lambda_a\in\mathbb{R}$ is also a ground state with $\quantexp{H}{\psi_*}=E_0$. Since this is valid for all values of $p\in\Delta^{K-1}$, it is valid on the interior $\Delta^{K-1}_+$ which parametrises the principal orbits. Thus, $\tilde{f}_0:\M^*_0\to\Delta^{K-1}_+$ is surjective and therefore $\Omega_{\rm FS}(\mc{K}_*)=1$.

	While the measure one statement does not assume any additional structure on the Hamiltonian $H$, note that  we require $m\geq K$ for an $m$-fold degenerate ground state. Since $\dim\mathbb{C}P^{m-1}=2(m-1)$, this is stronger than simply requiring $\dim\M^*\geq\dim\M_0/\G=\dim\M_N-\dim\mc{O}_0$ for the non-zero measure statement.  Additionally, the requirement of having a non-zero intersection with the ground state eigenspace is highly specialised and reflects a complete transfer of \emph{a priori} knowledge from the state preparation to the VQC ansatz design. Nevertheless, this provides a useful design principle  when we can enumerate the entire ground state sector's relation to the invariant subspace decomposition of a DLA.

	\section{Numerical Simulations} 
	\label{sec:num_sim}
	In this section, we show numerical support for the theoretically derived results in the paper by constructing four DLA groups and their corresponding VQC architecture of the form Eq.\ref{eq:u_vqc}. Note that the theoretical results are on reachability of the group and require infinite depth as well as the absence of any obstructions in optimisation. Therefore, they test the practical implications of the results but work under strictly weaker assumptions. Nevertheless, we find finite depth optimisation consistently fails when the ground state manifold is dimensionally obstructed, while show good convergence for most sampled reference states in the unobstructed case barring the DLA $\mf{g}_{(XX,YY)}$ for $n=7$.

	\subsection{Details of Dynamical Lie Algebras} 
	The sequence $\mathbf{G}$ is constructed with generators in the Pauli string basis with at most two non-identity factors. In the main text, as well as here, we therefore use the conventions 
	\begin{equation*}
		\begin{split}
			\sigma^i_s&\equiv\underbrace{I_2\otimes I_2\otimes...\otimes\underbrace{\sigma_s}_{i^{th}\text{ qubit}}\otimes I_2...\otimes I_2}_{n\text{ terms}}\quad,\\
			\sigma^i_{s_1}\sigma^j_{s_2}&\equiv\underbrace{I_2\otimes ...\otimes\underbrace{\sigma_{s_1}}_{i^{th}\text{ qubit}}\otimes...\otimes\underbrace{\sigma_{s_2}}_{j^{th}\text{ qubit}}\otimes ...\otimes I_2}_{n\text{ terms}}\quad,
		\end{split}
	\end{equation*} 
	for $s\in\{X,Y,Z\}$, and utilise the sets $\mf{C}_1=\{1,2,...,n\}$ and the choice of \emph{ordered} two-combinations $\mf{C}_2=\{(a,b)|a,b\in\mf{C}_1:a<b \}$. 
	The four classes of DLAs are constructed as : (i) $\mf{g}_{(XY)}$: $\sigma^i_X\sigma^j_Y$ for $(i,j)\in\mf{C}_2$, (ii) $\mf{g}_{(XX,YY)}$: $\sigma^i_X\sigma^j_X$ and $\sigma^i_Y\sigma^j_Y$ for $(i,j)\in\mf{C}_2$, (iii) $\mf{g}_{(X,YY)}$: $\sigma^i_X$ for $i\in\mf{C}_1$  and $\sigma^i_Y\sigma^j_Y$ for $(i,j)$ in $\mf{C}_2$, and (iv) $\mf{g}_{(Z,YY)}$: the same as $\mf{g}_{(X,YY)}$ with $\sigma^i_Z\in\mf{C}_1$ instead of $\sigma^i_X$.  For these choices of $\mbf{G}$, generators of the DLA are evaluated with the \texttt{liealg.lie\_closure} function implemented in \texttt{PennyLane (v0.42.3)}~\cite{Bergholm:2018cyq}.  Note that $\mf{g}_{(X,YY)}$ and $\mf{g}_{(Z,YY)}$ are isomorphic Lie algebras, while $\mf{g}_{(XX,YY)}$ is a subalgebra of both.
	
	\subsection{Principal orbit dimensions} 
	
	\begin{table}[t]
		\centering
		\small
		\setlength{\tabcolsep}{5pt}
		\renewcommand{\arraystretch}{1.5}		
		\begin{tabular}{lrr|rr|rr|rr}
			\toprule
			DLA &\multicolumn{2}{c|}{$n=4$}&\multicolumn{2}{c|}{$n=5$}&\multicolumn{2}{c|}
			{$n=6$}&\multicolumn{2}{c}{$n=7$}    \\
			\cline{2-3}\cline{4-5}\cline{6-7}\cline{8-9}
			& $D_\mathfrak{g}$ & $D_{0}$ & $D_{\mathfrak{g}}$ & $D_0$ &  $D_{\mathfrak{g}}$ & $D_0$ & $D_{\mathfrak{g}}$ & $D_{0}$ \\
			\midrule
			$\mathfrak{g}_{(XY)}$ & $\mathbf{16}$ & $\mathbf{16}$ & 64 & 48 & 256 & 112 & 1024 & 240 \\
			$\mathfrak{g}_{(XX,YY)}$ & 60 & 27 & 255 & 59 & 1020 & 123 & 4095 & 251 \\
			$\mathfrak{g}_{(Z,YY)}$ & 126 & 29 & 510 & 61 & 2046 & 125 & 8190 & 253 \\
			$\mathfrak{g}_{(X,YY)}$ & 126 & 29 & 510 & 61 & 2046 & 125 & 8190 & 253 \\
			\bottomrule
		\end{tabular}
		\caption{$D_{\mf{g}}\equiv\dim\mf{g}$ and principal orbit dimensions $D_0\equiv\dim\mc{O}_0$ for four choices of $\mbf{G}$ for $n\in\{4,5,6,7\}$. We see that except for $\mf{g}_{(XY)}$ and $n=4$, all other choices have a positive dimensional principal isotropy subgroup, so that $D_0$, rather than $D_\mf{g}$ determines maximal asymptotic reachability of non-universal VQCs.}
		\label{tab:dla_orb_dim}
	\end{table}
	Since $\dim\mc{O}_0$ is the rank of the orbit map for generic states in $\M_N$, we implemented the $\G$-action obtained by exponentiating elements of the DLA and its linear action on normalised pure states $\ket{\psi}$ in $\mathbb{C}^N$. This was converted to the presentation of an \emph{orbit map}  on $\M_N$ as follows. We first recast the complex vector $\mbf{z}\in\mathbb{C}^N$ in terms of \emph{affine coordinates} as $\mbf{z}'=(z_2/z_1,z_3/z_1,...,z_N/z_1)\in\mathbb{C}^{N-1}$, where $z_i$ are the components of $\mbf{z}$. $\mbf{z}'$ is further represented as $\mbf{x}=\text{Real}(\mbf{z}')\oplus\text{Imag}(\mbf{z}')$ so that $\mbf{x}\in\mathbb{R}^{2(N-1)}$ parametrises a local chart of $\M_N\cong\mathbb{C}P^{N-1}$ as a real analytic manifold. The Jacobian and its rank are evaluated numerically at the identity element of $\G$ for thirty randomly sampled states using \texttt{Jax (v0.6.2)}~\cite{jax2018github}. As expected, the rank was constant over all sampled states.

	For $n\in\{4,5,6,7\}$ and the four DLAs the principal orbit dimensions $\dim\mc{O}_0\equiv D_0$ and $\dim\mf{g}\equiv D_\mf{g}$ are shown in table~\ref{tab:dla_orb_dim}. We see that only $\mf{g}_{(XY)}$ for $n=4$ has a locally free action on the principal orbits, while the rest have $\dim\mf{g}>\dim \mc{O}_0$ so that $\dim\mc{S}_0>0$. Additionally, for the rest $\dim\mf{g}>\dim\M_N$ but $\dim\mc{O}_0<\dim\M_N=2^{n+1}-2$ highlighting the crucial role played by $\mc{O}_0$ rather than $\mf{g}$ in characterising \emph{generic and maximal state reachability} of non-universal VQC ans\"atze.

	\begin{table}
		\centering
		\small
		\setlength{\tabcolsep}{3pt}
		\renewcommand{\arraystretch}{1.75}	
		\begin{tabular}{lcc|cccc}
			\toprule
			DLA& $n=4$ & $n=6$ &\multicolumn{2}{c}{$n=5$}&\multicolumn{2}{c}{$n=7$} \\
			\cline{4-5}
			\cline{6-7}
			&$D_{\rm GS}$ &$D_{\rm GS}$ & $D_{\rm GS}$&$R_{d\Sigma}$ & $D_{\rm GS}$& $R_{d\Sigma}$\\  
			\midrule
			$\mathfrak{g}_{(XY)}$ & 7 & 31 & 48 & 54 & 240 & 246 \\
			$\mathfrak{g}_{(XX,YY)}$ & 6 & 30 & 59 & $\mbf{62}$ & 251 & $\mbf{254}$ \\
			$\mathfrak{g}_{(X,YY)}$ & 14 & 62 & 61 & $\mbf{62}$ & 253 & $\mbf{254}$ \\
			$\mathfrak{g}_{(Z,YY)}$ & 14 & 62 & 61 & $\mbf{62}$ & 253 & $\mbf{254}$ \\
			\bottomrule
		\end{tabular}
		\caption{Orbit dimensions $\dim\mc{O}_{\rm GS}\equiv D_{\rm GS}$ of the ground state manifold of the Hamiltonian in Eq.~\ref{eq:ham} for different $n$. For odd $n$, we also report $R_{d\Sigma}\equiv \rank(d\Sigma_{(g,p)})$ and highlight those that satisfy the second condition of theorem~\ref{theorem:gen_analytic} ($R_{d\Sigma}=\dim\M_N=2^{n+1}-2$).
		}
		\label{tab:ground_geom} 
	\end{table}
	
	\subsection{Generic rank on ground state manifolds}  
	We test the numerical rank conditions of the ground state manifold, and dimensional obstruction in finding the ground state of the Hamiltonian:
	\begin{equation}
		\label{eq:ham} 
		\begin{split} 
			H=\sum_{k=1}^2J_k\sum_{\{i,j\}\in E_k} &(\sigma_X^i\sigma_X^j+\sigma_Z^i\sigma_Z^j)\quad.
		\end{split} 
	\end{equation} 
	Here, $E_1$  and $E_2$ are edge sets consisting of the next and next-to-nearest neighbour connections, respectively, under periodic boundary conditions, and $J_1=1.0$ and $J_2=0.6$. The cost function is the energy expectation $C(\vtheta)=\quantexp{H}{\psi(\vtheta)}$.
	For odd $n$, the ground state is four-fold degenerate, while it is non-degenerate for even $n$. Therefore, for odd $n$, we have $\M^*\cong\mathbb{C}P^{3}$ with $\dim(\M^*)=6$, for even $n$ we have $\M^*\cong\mathbb{C}P^0$ with $\dim(\M^*)=0$. 
	We use four different classes of VQCs with DLAs: $\mf{g}_{(XY)}$,
	$\mf{g}_{(XX,YY)}$, $\mf{g}_{(X,YY)}$ and 
	$\mf{g}_{(Z,YY)}$. Details of their construction and training are given in the supplementary material. 
	
	In table~\ref{tab:ground_geom}, we show the orbit dimensions of the ground state. For even $n$, it is the Jacobian rank of the orbit map of the unique state while for odd $n$, it is evaluated on hundred randomly sampled states in $\im(f)$. It also shows the rank of $d\Sigma:\G\times\M^*\to\M_N$ for odd $n$, which was evaluated for randomly sampled group elements for each state in $\im(f)$. We see that the ground states reside in singular orbits for even $n$ while they are regular for odd ones. Moreover, from theorem~\ref{theorem:gen_analytic}, we see that the generic rank of $d\Sigma_{(g,p)}$ is saturated at $\dim\M_N$ for all DLAs except $\mf{g}_{(XY)}$ for $n\in\{5,7\}$.

	\subsection{Dimensional obstruction to convergence}

	\begin{table}
		\centering 
		\small	
		\resizebox{0.49\textwidth}{!}{
			\begin{tabular}{lrcc|rcc}
				\toprule
				DLA&$\Delta D$ & $n=4$ & $n=6$& $\Delta D$ & $n=5$ & $n=7$ \\[5pt]
				\midrule
				$\mathfrak{g}_{(XY)}$ & -14 & $1.000^{+0.096}_{-0.037}$ & $1.256^{+0.057}_{-0.031}$ & -8 & $0.277^{+0.015}_{-0.061}$ & $0.602^{+0.090}_{-0.102}$ \\[5pt]
				$\mathfrak{g}_{(XX,YY)}$ & -3 & $1.080^{+0.226}_{-0.044}$ & $1.387^{+0.020}_{-0.042}$ & 3 & $0.000^{+0.000}_{-0.000}$ & $0.073^{+0.019}_{-0.015}$ \\[5pt]
				$\mathfrak{g}_{(Z,YY)}$ & -1 & $0.469^{+0.059}_{-0.021}$ & $0.844^{+0.030}_{-0.026}$ & 5 & $0.000^{+0.000}_{-0.000}$ & $0.005^{+0.004}_{-0.001}$ \\[5pt]
				$\mathfrak{g}_{(X,YY)}$ & -1 & $0.497^{+0.068}_{-0.073}$ & $0.926^{+0.067}_{-0.053}$ & 5 & $0.000^{+0.000}_{-0.000}$ & $0.003^{+0.006}_{-0.001}$ \\[5pt]
				\bottomrule
			\end{tabular}
		}
		\caption{
			$\operatorname{med}(\Delta E_r)$, with asymmetric
			error bars evaluated with the upper and lower quartiles shown as the
			superscript and subscript, respectively for
			\emph{overparameterized} VQCs, evaluated by training over ten randomly sampled
			reference states. $\Delta D\geq0$ is a necessary condition for probabilistic reachability under corollary~\ref{cor:dim_obs}. 
		}
		\label{tab:codim_vqe_summary}
	\end{table}
	For each DLA, we construct an over-parametrised VQC such that its QFIM rank saturates for generic reference states. The redundancy for odd $n$ is kept to a single layer above rank saturation while for even $n$ we add an additional layer. These VQCs are trained ten times on randomly initialised reference states (sampled with a measure absolutely continuous with respect to $d\Omega_{\rm FS}$) with randomly sampled initial weights for each state. 
	
	The random reference states were sampled by first initialising a $2N$-dimensional real vector with elements uniformly sampled in $[-1,1)$. This is then converted to an $N$-dimensional complex vector after which it is normalised. For each parametrised gate, the weights are uniformly sampled in $[-\pi,\pi)$.  
	
	All VQCs were trained for a maximum of 4500 iterations with an early stop criterion if the cost function converged to within a difference of $10^{-8}$ with the true ground state energy.\footnote{All numerical experiments were done in double precision by utilising the \texttt{jax.numpy.float64} and \texttt{jax.numpy.complex128} data types for real and complex variables, respectively.}  The first half of these iterations utilised the \texttt{Adam}~\cite{AdamOpt} optimiser implemented in  \texttt{optax (v0.2.8)}~\cite{deepmind2020jax} with a learning rate of 0.01. After this, we implemented a quantum natural gradient descent with the same initial learning rate which decays by a factor of 0.1 per three hundred iterations until it reaches $10^{-8}$. The training generally converged within single precision during the first phase for $\mf{g}_{(X,YY)}$ and $\mf{g}_{(Z,YY)}$ at $n=5$.

	For each run $r$, we evaluate 
	\begin{equation*}
		\Delta E_r=\frac{\min_{j\in\mc{N}}(C_r(\vtheta_{j})-E_0)}{E_1-E_0}\quad,
	\end{equation*} 
	where $E_0$ and $E_1$ are the ground state and first excited state energies, respectively, and $\mc{N}$ runs over the total number of iterations in the training. In table~\ref{tab:codim_vqe_summary}, we show  the median $\operatorname{med}(\Delta E_r)$ with asymmetric errors evaluated from the upper and lower quartiles along with explicit values of $\Delta D$ grouped into even and odd qubits. For even $n$ where the principal orbits of all four DLAs have the dimensional obstruction, we see poor convergence. The same persists for odd where $\mf{g}_{(XY)}$ is still dimensionally obstructed. For the ones that satisfy the second condition of theorem~\ref{theorem:gen_analytic}, we see that apart from $\mf{g}_{(XX,YY)}$ for $n=7$, all other cases have converged to the ground state within less than a percent of the gap $E_1-E_0$.

	While the good convergence for those that satisfy $\Delta D>0$ for ten generically sampled states may be due to the ground state manifold having non-trivial intersection with each invariant sector, we find that it nevertheless requires overparametrisation. 
	To do this, we fix a generic reference state and train the VQC ten times from random weight initialisation for different $\Delta\bar{d}_0\in\mathbb{Z}$ such that $d=d_0+\Delta\bar{d}_0\;L$, $d_0$ being the threshold of rank saturation for the principal orbits. For $n=5$, we set $L=10$ and $-5\leq \Delta\bar{d}_0\leq 5$, while for $n=7$, $L$ is the size of $\mbf{G}$, and $-3\leq \Delta\bar{d}_0\leq 3$. 
	For each $\Delta \bar{d}_0$, we plot the median over the ten training instances and the interquartile range as the error bars in Fig.~\ref{fig:vqe_conv_odd} from which it is evident that overparametrisation is necessary for convergence. 
	\begin{figure}
		\includegraphics[width=0.35\textwidth]{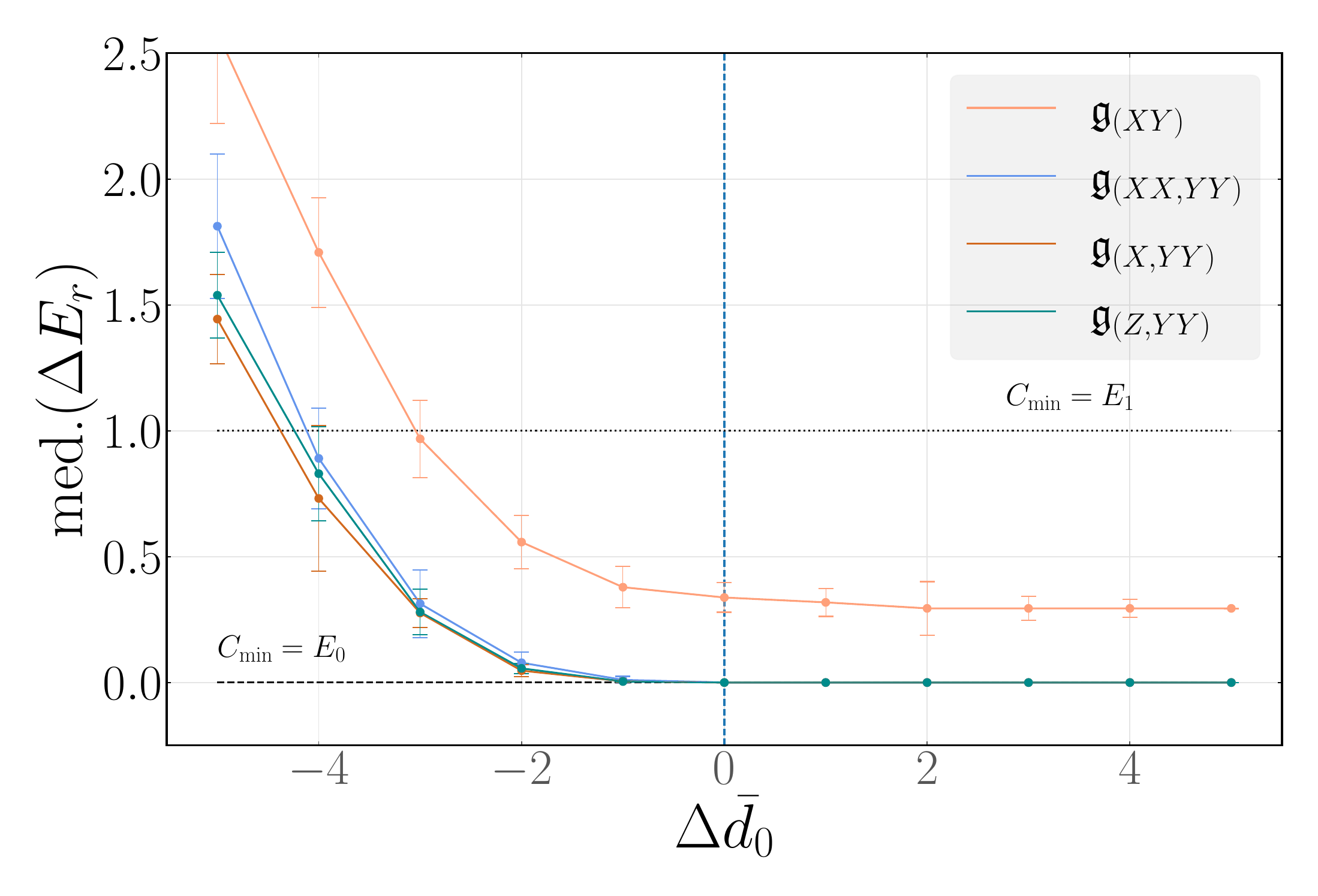}
		\includegraphics[width=0.35\textwidth]{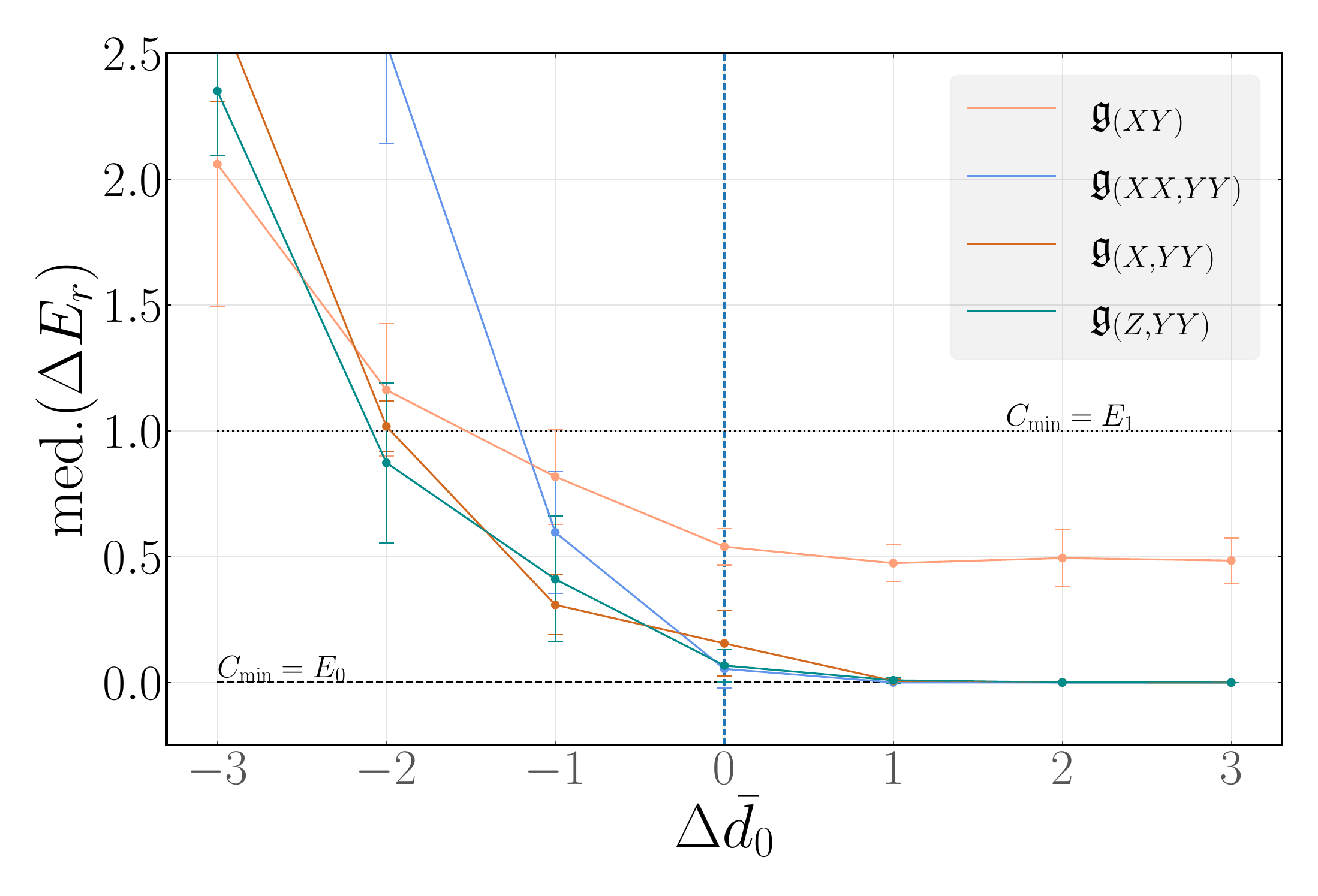}
		\caption{Median of the $\Delta E_r$ and its interquartile range as error bars over ten randomly initialised training runs on a fixed random reference state for $n=5$ (top) and $n=7$ (bottom) verifying that overparametrisation is necessary for improved convergence where $\Delta \bar{d}_0=0$ denotes the threshold.}
		\label{fig:vqe_conv_odd} 
	\end{figure}

	\section{Conclusions}
	\label{sec:conc}
	In this work, we described stratification of the pure state manifold under the smooth action of compact dynamical Lie groups. Due to the principal orbit-type theorem, we showed that generic orbits are the maximal ones. Thus, the dimension of the dynamical Lie algebra is the upper bound of reachability for any pure state only when it acts locally freely on the generic orbits, highlighting the importance of these orbits for characterising the trainability and reachability of non-universal variational quantum circuits. 
	
	Utilising the same theorem, we derived necessary and sufficient conditions on the structure of the group action for a non-zero probability of reaching the solution manifold for generically prepared states. Thereby, we get a simple necessary condition $\Delta D\geq 0$, where $\Delta D$ is the difference between the solution manifold dimension and the codimension of the principal orbits in the pure state manifold. This provides a heuristic design rule in cases where the manifold dimension is generally known from the symmetry of the problem. Numerical experiments for the variational quantum eigensolver for low qubit counts corroborate the utility of $\Delta D$ as a useful check for the need to sample non-generic reference states. 
	
	The work shows that the principal orbits, which are intrinsically determined by the dynamical Lie group action, provide the generic background on which non-universal variational quantum circuits act. Therefore, complementary to the characterisation of DLAs~\cite{Wiersema2024DLA,Allcock2026dynamicallie,Aguilar2024,Kokcu2022,Goh2025}, a concrete characterisation of the differential structure of these principal orbits will help understand the training dynamics and reachability-diagnostics of VQCs in the NISQ era and beyond.

	\appendix

	\section{Background}  
	\label{app:background}
	We give a brief summary of the important results utilised in proving the results.
	
	\subsection{Genericity of rank on domain and image} 
	Many of the theoretical results in this work connect measures on the domain and the co-domain through the analyticity and smoothness of the different functions. Note that analyticity is a stronger condition and therefore subsumes smoothness. The need for local surjectivity for non-zero volume directly results from Sard's theorem~\cite{Sard1942-yn}. Here, we discuss the version applied to smooth manifolds~\cite{Hirsch1976,Lee:2025ppl}.

	Let $\M_1$ and $\M_2$ be two smooth manifolds of real dimensions $m_1$ and $m_2$, respectively.  
	For a smooth map $\Phi:\M_1\to \M_2$, at any point $\x_0\in \M_1$, we can define the differential pushforward $d\Phi_{\x_0}: T_{\x_0} \M_1\to T_{\Phi(\x_0)}\, \M_2$. In terms of coordinate charts $\valpha\in\mathbb{R}^{m_1}$ and $\vbeta(\valpha)\in\mathbb{R}^{m_2}$, this is realised by the Jacobian matrix $$[\mbf{J}(\x_0)]_{ij}=\left.\frac{\partial\beta_i}{\partial\alpha_j}\right|_{\x=\x_0}\quad,$$
	whose rank determines whether the map $\Phi$ is locally surjective or not at $\x_0$. An element $\y\in \M_2$ 
	is a \emph{regular value} if $\rank(\mbf{J}(\x))= m_2$ for all $\x$ in the preimage $\Phi^{-1}(\y)$ of $\y$. If at least one $\x$ in $\Phi^{-1}(\y)$ has $\rank(\mbf{J}(\x))<m_2$, then $\y$ is called a \emph{critical value} of the map $\Phi$. We state Sard's theorem for smooth manifolds along with its immediate consequence required for the following proofs.
	\begin{theorem}[Sard's theorem]
		\label{theorem:sard} 
		For any smooth map $\Phi:\M_1\to \M_2$, the set of critical values in $\M_2$ is a set of measure zero. Therefore, if $\max_{\x\in\M_1}\rank(d\Phi_\x)< m_2\equiv\dim \M_2$, every point in $\im(\Phi)$ is a critical value and therefore, $\im(\Phi)\subset\M_2$ is a set of measure zero. 
	\end{theorem}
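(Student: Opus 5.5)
The statement to prove is Sard's theorem (Theorem~\ref{theorem:sard}) together with its corollary about images of rank-deficient maps. I would take the first sentence as the classical result and prove only the consequence.

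\textbf{Reduction to the Euclidean case.} Since $\M_1$ is second countable, I cover it by countably many coordinate charts $(U_i,\valpha_i)$. Each $\Phi(U_i)$ meets countably many charts $(V_j,\vbeta_j)$ of $\M_2$. The critical values of $\Phi$ are the countable union of the critical values of the local maps
\begin{equation*}
\vbeta_j\circ\Phi\circ\valpha_i^{-1}:\valpha_i(U_i\cap\Phi^{-1}(V_j))\to\mathbb{R}^{m_2}.
\end{equation*}
Being measure zero is preserved under diffeomorphisms, since they are locally Lipschitz. It is also preserved under countable unions. Moreover, for any smooth Riemannian volume form, such as $d\Omega_{\rm FS}$, measure zero is the same as Lebesgue measure zero in charts, because the density $\sqrt{\det\mbf{g}}$ is continuous and positive. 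So it suffices to treat smooth maps $F:U\subset\mathbb{R}^{m_1}\to\mathbb{R}^{m_2}$.

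\textbf{Euclidean Sard.} I would follow the standard Milnor induction on $m_1$. Let $C$ be the critical set and $C_k$ the set where all partial derivatives of order at most $k$ vanish.
\begin{enumerate}[(i)]
\item For $C\setminus C_1$, some first derivative is nonzero. A local change of coordinates makes $F$ preserve the first coordinate. Fubini and the induction hypothesis applied to slices then give that $F(C\setminus C_1)$ is null.
\item For $C_k\setminus C_{k+1}$, some $k$-th derivative $w$ has $w=0$ but $\partial w\neq0$. The set $\{w=0\}$ is a hypersurface, and $C_k$ is contained in the critical points of $F$ restricted to it. Induction then shows $F(C_k\setminus C_{k+1})$ is null.
\item For $C_k$ with $k>m_1/m_2-1$, Taylor's theorem gives $|F(x+h)-F(x)|\le c|h|^{k+1}$ on a cube of side $\delta$. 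Subdividing into $r^{m_1}$ subcubes, their images have total volume at most $c' r^{m_1} r^{-(k+1)m_2}\to0$ as $r\to\infty$.
\end{enumerate}

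The main obstacle is step (ii), which needs smoothness of class $C^\infty$, or at least $C^{k}$ with $k$ large. In our setting this is automatic, since every map considered ($\Sigma$, $\Gamma^*$, the group action) is real analytic. Given the paper's expository role, I expect simply to cite \cite{Sard1942-yn,Lee2012} for this part.

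\textbf{The consequence.} Suppose $\max_{\x}\rank(d\Phi_\x)<m_2$. Then every $\x\in\M_1$ is a critical point. Take any $\y\in\im(\Phi)$. Its preimage $\Phi^{-1}(\y)$ is nonempty and consists of critical points, so $\y$ is a critical value. Hence $\im(\Phi)$ is contained in the set of critical values, which has measure zero by the first part. Since $\im(\Phi)$ is a countable union of images of compact sets, it is $\sigma$-compact and therefore measurable, so $\im(\Phi)$ is itself a measure-zero subset of $\M_2$.
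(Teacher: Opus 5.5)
Your proposal is correct and follows the paper. The paper also takes the first sentence from the literature and derives the consequence exactly as you do: when the rank is everywhere below $m_2$, every point of $\M_1$ is critical, so every image point is a critical value, and the image is null. Your chart reduction, the Milnor-style sketch and the $\sigma$-compactness remark are sound extras that the paper does not include.
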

	\noindent 
	Note that any $\y\in\M_2\setminus\im(\Phi)$, $\Phi^{-1}(\y)=\emptyset$, and $\mbf{y}$ is vacuously a regular value. Therefore, when $\max_{\x\in\M_1}\rank(d\Phi_\x)=m_2$, the image has non-zero measure but the theorem does not guarantee that it is a set of full measure in $\M_2$. For such stronger results, one can take recourse to analyticity of the map, and the following lemma:
	\label{app:gen_ranks}
	\begin{lemma}
		\label{lemma:analyt_gen_rank}
		Let $\mbf{J}(\x)$ be a matrix whose entries are real analytic functions of $\x$ in a connected domain $\M_1$, let $r(\x)=\rank(\mbf{J}(\x))$ and $r_0=\max_{\x\in \M_1}\rank(\mbf{J}(\x))$, then $r(\x)=r_0$ almost everywhere in $\M_1$.  
	\end{lemma}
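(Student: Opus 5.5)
The plan is to reduce the rank statement to the zero set of a single real analytic function and then use the classical fact that a real analytic function on a connected domain is either identically zero or has a null zero set. First I fix conventions. $\M_1$ is a connected real analytic manifold of dimension $m_1$, with measure taken in the Lebesgue sense in charts; equivalently, with respect to any smooth positive density such as a Riemannian volume form. The entries of $\mbf{J}(\x)$ are real analytic in these charts. Since $r_0$ is a maximum of integer values bounded by the matrix size, it is attained at some $\x_0\in\M_1$.

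At $\x_0$ there is an $r_0\times r_0$ submatrix of $\mbf{J}(\x_0)$, with row set $I$ and column set $K$, whose determinant is nonzero. Define
\begin{equation*}
\Delta(\x)=\det\big([\mbf{J}(\x)]_{I,K}\big).
\end{equation*}
This is a polynomial in the entries of $\mbf{J}$, so it is real analytic on $\M_1$, and $\Delta(\x_0)\neq 0$. Wherever $\Delta(\x)\neq0$ we have $r(\x)\geq r_0$, and hence $r(\x)=r_0$ by maximality. It therefore suffices to show that the zero set $Z=\Delta^{-1}(0)$ has measure zero.

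The main step is the identity-theorem argument. Let $U$ be the set of points at which $\Delta$ vanishes on a whole neighbourhood. $U$ is open by definition. It is also closed: at a limit point of $U$, every derivative of $\Delta$ in a chart vanishes by continuity, so the convergent Taylor series is zero there, and the point lies in $U$. Since $\M_1$ is connected and $\x_0\notin U$, we get $U=\emptyset$. So $\Delta$ does not vanish identically on any chart domain.

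It remains to show that a real analytic function on a connected open set $W\subset\mathbb{R}^{m_1}$, not identically zero, has a zero set of Lebesgue measure zero. This is the lemma cited from~\cite{Mityagin2020}, and I would invoke it directly. A self-contained proof would go by induction on $m_1$ using Fubini's theorem. For $m_1=1$, zeros are isolated, hence countable. For the induction step, near a point choose coordinates in which some partial derivative of $\Delta$ in the last variable is nonzero. Then for almost every fixed value of the first $m_1-1$ coordinates, the one-variable slice is not identically zero, so it has null zero set, and Fubini gives measure zero locally. Covering $\M_1$ by countably many charts (second countability) then shows that $Z$ is null. Therefore $r(\x)=r_0$ on $\M_1\setminus Z$, that is, almost everywhere.

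I expect the only delicate step to be this measure-zero fact for the zero set, in particular the Fubini slicing with analyticity preserved on the slices. Citing~\cite{Mityagin2020} avoids that work, and the rest of the argument is routine.
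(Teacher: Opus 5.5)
Your proposal is correct and follows the same route as the paper. Both take an $r_0\times r_0$ minor that is nonzero at a point of maximal rank, note that its determinant is real analytic on the connected domain, and invoke \cite{Mityagin2020} to conclude that it is nonzero almost everywhere. Your open--closed identity-theorem argument, countable chart cover and Fubini sketch (where a linear change of coordinates makes a pure derivative in the last variable nonzero at the base point) fill in details that the paper leaves implicit.
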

	\begin{proof} 
		For $\rank(\mbf{J}(\x_0))=r_0$ at some $\x_0\in \M_1$, recall that there should be at least one $r_0\times r_0$ square sub-matrix of $\mbf{J}(\x)$ with non-zero determinant. Due to
		analyticity, the determinant of this minor is an analytic function of
		$\x$ in the connected domain $\M_1$. Therefore, if it is non-zero at a
		single point, then it is non-zero almost everywhere in
		$\M_1$~\cite{Mityagin2020}.
	\end{proof} 
	The lemma proves the genericity of rank of a matrix with real analytic components on a connected domain and is independent of whether this rank is maximal or not. Applied to an embedding map $\Phi:\M_1\to\M_2$, that is also real analytic, we can transfer the measure zero condition on the domain to its $\im(\Phi)$. However, the measure of $\im(\Phi)$ in $\M_2$ cannot be inferred without an explicitly realised embedding map $\Phi:\M_1\to\M_2$. 
	\subsection{Orbit-type stratification} 
	We discuss essential aspects of the \emph{orbit-type stratification}. Our presentation will primarily be based on Refs.~\cite{Alexandrino2015-ye,BredonAll}. Let $\mbf{A}:\G\times\M_N\to\M_N$ be the smooth unitary action of a compact Lie group $\G$ on $\M_N\cong\mathbb{C}P^{N-1}$ for $N=2^n$. We can first define a principal orbit as
	\begin{definition}
		An orbit $\mc{O}_{\psi}$ is a principal orbit if there is a neighbourhood $V$ of $\psi$ in $\M_N$ such that for every $\psi'\in V$,  $S_{\psi}\subset S_{\mbf{A}(g,\psi')}$ for some $g\in\G$.
	\end{definition}
	From the definition it is evident that the principal orbits are the largest possible orbits in the neighbourhood as their isotropy subgroups are essentially the smallest. We state the principal orbit theorem, together with additional results which will be utilised in the proof:
	\begin{theorem}[Principal orbit-type theorem]
		Under the smooth action of a compact Lie group $\G$ on the manifold $\M_N$,  if $\M_0$ denotes the set of all points in $\M_N$ that are contained in principal orbits, then the following holds:
		\begin{enumerate}
			\item $\M_0$ forms an open, dense subset in $\M_N$
			\item the subset $\mc{M}_0/\mc{G}$ of $\M_N/\G$ is connected
			\item $\M_0$ is connected if $\G$ is connected.  
		\end{enumerate} 
	\end{theorem}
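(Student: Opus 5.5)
The statement is the classical principal orbit-type theorem for a smooth action of a compact Lie group, here $\G$ acting on $\M_N\cong\mathbb{C}P^{N-1}$. I would prove it via the slice theorem. Because $\G$ is compact, we can average a Riemannian metric to get a $\G$-invariant one; the Fubini–Study metric already is invariant, since $\mc{U}(\G)$ acts unitarily. The exponential map on the normal bundle of an orbit then gives, for each $\psi\in\M_N$, a tubular neighbourhood equivariantly diffeomorphic to $\G\times_{\mc{S}_\psi}V_\psi$. Here $V_\psi$ is a small ball in the normal space $\nu_\psi=T_\psi\M_N/T_\psi\mc{O}_\psi$, on which $\mc{S}_\psi$ acts linearly through the slice representation. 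A direct consequence is that every point of the tube has isotropy group conjugate to a subgroup of $\mc{S}_\psi$. In particular, $\mc{O}_\psi$ is principal in the sense of the definition above exactly when $\mc{S}_\psi$ acts trivially on $\nu_\psi$.

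For openness, take $\psi$ principal. Then $\mc{S}_\psi$ acts trivially on the slice, so the tube is $\G/\mc{S}_\psi\times V_\psi$ and every point in it has isotropy exactly conjugate to $\mc{S}_\psi$. Each such point therefore also satisfies the defining property, and $\M_0$ is open.

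For density, I would argue by induction on $\dim\M_N$ together with the number of connected components of the isotropy group; compactness guarantees that conjugacy classes of closed subgroups satisfy the descending chain condition. Near an arbitrary $\psi$ it suffices to find principal points of the linear action of $\mc{S}_\psi$ on $V_\psi$, because the slice map sends $\mc{S}_\psi$-principal points of $V_\psi$ to $\G$-principal points of the tube. If $\mc{S}_\psi$ acts trivially, we are done. Otherwise I would apply the induction hypothesis to the $\mc{S}_\psi$-action on the unit sphere of $\nu_\psi$, which has lower dimension, and then use cones: the isotropy of $tv$ is constant for $t>0$. This gives principal points arbitrarily close to $\psi$.

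Connectedness of $\M_0/\G$ is the step I expect to be the main obstacle. My plan is to show that the singular set $\M_N\setminus\M_0$ has codimension at least $2$ in the orbit space, or more precisely that it cannot separate. In the linear model $V_\psi/\mc{S}_\psi$, the non-principal set is a union of fixed-point subspaces of codimension $\geq1$. Codimension-one strata occur exactly where the slice representation is a reflection, i.e. exceptional orbits with $\mc{S}_\psi/\mc{S}_0\cong\mathbb{Z}_2$, and for those the quotient $V/\mathbb{Z}_2$ is a half-space, which is locally connected off the boundary. I would then run a local-to-global argument. The manifold $\M_N$ is connected and $\M_N/\G$ is connected. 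Inductively, each slice quotient minus its singular locus is connected, since it is a cone over a sphere quotient with connected regular part. By covering $\M_N/\G$ with such charts and chaining them along paths, the principal part of $\M_N/\G$ is connected.

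Finally, for connected $\G$, $\M_0$ is a fibre bundle over $\M_0/\G$ with connected fibres $\G/\mc{S}_0$. Any two points of $\M_0$ can be joined by lifting a path in the base locally through slices and moving along orbits within fibres, so $\M_0$ is connected.
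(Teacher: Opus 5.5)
The paper does not prove this theorem: it quotes it as background from the textbooks it cites (e.g.\ Bredon). Your outline is the standard proof found there: an invariant (Fubini--Study) metric, the slice theorem, ``principal $\Leftrightarrow$ trivial slice representation'', induction on dimension through the $\mc{S}_\psi$-action on the unit normal sphere together with the cone structure, and a local-to-global chaining argument. The route is right, but one claim in your connectedness paragraph is false and should be dropped rather than proved.

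\textbf{The false claim.} The non-principal set need not have codimension $\geq 2$ in $\M_N/\G$, and codimension-one strata do not come only from reflections (exceptional orbits). The paper's own example in Fig.~\ref{fig:bloch_strat} is a counterexample.
\begin{itemize}
\item For $SO(2)$ rotating $S^2$, the orbit space is an interval, and the two fixed poles are its endpoints, hence codimension one.
\item The slice representation at a pole is the rotation of $\mathbb{R}^2$, transitive on the unit circle, so these orbits are singular, not exceptional.
\item In general, write $\nu_\psi=\nu_\psi^{K}\oplus W$ with $K=\mc{S}_\psi$. The stratum through $\psi$ has codimension one in the orbit space exactly when $K$ is transitive on the unit sphere of $W$. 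The reflection case is only $\dim W=1$.
\end{itemize}

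\textbf{Why this does not sink the argument.} Your cone induction never uses the codimension claim and covers all cases uniformly.
\begin{itemize}
\item If $K$ acts trivially, the regular part of the slice is all of it.
\item If $\dim\nu_\psi=1$ and $K$ acts by $-1$, the regular part is an open half-line.
\item If $\dim\nu_\psi\geq 2$, the unit sphere is connected. The induction hypothesis, stated for all compact groups acting on connected manifolds of smaller dimension, makes its principal part modulo $K$ connected. The principal part of $V_\psi/K$ is then an open interval times that set. This includes the transitive-sphere case, where the sphere's principal part modulo $K$ is a single point.
\end{itemize}
In the chaining step, say explicitly that consecutive overlaps $U_i\cap U_{i+1}$ meet $\M_0/\G$, because $\M_0$ is open and dense.

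\textbf{Two smaller points.} First, the equivalence ``principal $\Leftrightarrow$ $\mc{S}_\psi$ acts trivially on $\nu_\psi$'' uses that a compact Lie group is never conjugate to a proper subgroup of itself (compare dimension and number of components); you should state this. Second, the density induction needs only $\dim\M_N$, since the normal sphere has strictly smaller dimension, so the extra induction on the number of components of the isotropy group is superfluous.
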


	\subsection{Absolute continuity of measures} 
	Next we define absolute continuity of measures:
	\begin{definition}
		For two measures $d\mu_1$ and $d\mu_2$ defined on the same measurable space $(X,\sigma)$, $d\mu_2$ is absolutely continuous with respect to $d\mu_1$ if for every measurable subset $A\in\sigma$ of $X$, $\mu_1(A)=0\implies\mu_2(A)=0$.
	\end{definition}
	\noindent 
	Primarily for sets of full measure, it is straightforward to deduce from finite additivity of measures that:
	\begin{lemma}
		\label{lemma:conull_abs_cont}
		For two normalised measures $d\mu_1$ and $d\mu_2$ on the measurable space $(X,\sigma)$, if $\mu_1(A)=1$ for any $A\in\sigma$, and $d\mu_2$ is absolutely continuous with respect to $d\mu_1$, then $\mu_2(A)=1$. 
	\end{lemma}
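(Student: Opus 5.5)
We need to prove Lemma \ref{lemma:conull_abs_cont}: if $\mu_1(A)=1$ and $d\mu_2$ is absolutely continuous with respect to $d\mu_1$, then $\mu_2(A)=1$. The plan is to pass to the complement, where the measure-zero hypothesis of absolute continuity can be used directly.

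First, set $A^c = X\setminus A$. Since $\sigma$ is a $\sigma$-algebra, $A^c$ is measurable. The sets $A$ and $A^c$ are disjoint and their union is $X$, so finite additivity of $\mu_1$ together with the normalisation $\mu_1(X)=1$ gives
\[
\mu_1(A^c)=\mu_1(X)-\mu_1(A)=1-1=0 .
\]
Subtracting is legitimate here because all the quantities involved are finite.

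Second, absolute continuity of $d\mu_2$ with respect to $d\mu_1$, applied to the measurable set $A^c$, turns $\mu_1(A^c)=0$ into $\mu_2(A^c)=0$.

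Third, finite additivity of $\mu_2$ together with its normalisation $\mu_2(X)=1$ yields
\[
\mu_2(A)=\mu_2(X)-\mu_2(A^c)=1 .
\]

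There is no real obstacle in this argument. The only points needing care are that the complement $A^c$ is measurable, which is what allows absolute continuity to be applied to it, and that both measures are normalised, so that subtracting from $\mu(X)$ is well defined.
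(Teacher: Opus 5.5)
Your proof is correct and takes essentially the same approach as the paper: both use finite additivity on the partition $X = A \cup (X\setminus A)$ to get $\mu_1(X\setminus A)=0$, apply absolute continuity to the complement, and then use additivity again to conclude $\mu_2(A)=1$. You also spell out that $X\setminus A$ is measurable and that finiteness justifies the subtraction, which the paper leaves implicit.
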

	\begin{proof}
		For any measurable subset $A\subset X$, we have $ A\cap(X\setminus A)=\emptyset$ and $A\cup (X\setminus A)=X$. Since measures of disjoint subsets are additive, $\mu_a(X)=\mu_a(A)+\mu_a(X\setminus A)$ for $a\in\{1,2\}$.  Hence, $\mu_1(A)=\mu_1(X)=1\implies\mu_1(X\setminus A)=0$. From absolute continuity $\mu_1(X\setminus A)=0\implies\mu_2(X\setminus A)=0$. Therefore $\mu_2(A)=\mu_2(X)=1$.
	\end{proof}
	Under any measure $d\mu$, a set $D\subset X$ of full measure need not be the entire set or even dense in $X$. 
	A measure is \emph{strictly positive} if it assigns non-zero measure to every open subset of $X$. Therefore, a set of full measure under a strictly positive measure is automatically dense.
	\section{Proof of main results} 
	\label{app:proofs}
	\subsection{Proof of proposition~\ref{prop:toy_reachability}}
	\label{app:toy_reachability}
	Note that the function $\Gamma^*:\Theta\to\mc{O}_{\psi_r}$ is smooth and  when $\max_{\vtheta\in\Theta}\rank(d\Gamma^{*}_\vtheta)=\max_{\vtheta\in\Theta}\rank(\mbf{F}^*(\vtheta))<\dim\mc{O}_{\psi_r}$,  it is nowhere locally surjective. Thus, the entire image $\im(\Gamma^{*})$ consists of \emph{critical values} and therefore $P\left(\psi_*\in\im(\Gamma^{\psi_r})\right)=0$ via theorem~\ref{theorem:sard}. On the other hand,  if $\max_{\vtheta\in\Theta}\rank(\mbf{F}^*(\vtheta))=\dim\mc{O}_{\psi_r}$,  $\im(\Gamma^*)$ contains an open neighbourhood in $\mc{O}_{\psi_r}$ via the \emph{local submersion theorem}. Since $d\Omega_{\psi_r}$ is a Riemannian volume form and assigns non-zero volume to every open subset in $\mc{O}_{\psi_r}$ (i.e. it is a \emph{strictly positive} measure), we have $P\left(\psi_*\in\im(\Gamma^{\psi_r})\right)>0$. 
	
	\subsection{Proof of lemma~\ref{lemma:surj_equiv}}
	\label{app:surj_equiv} 
	The openness of $\M_0$ in $\M_N$ makes the preimage $f^{-1}(\M_0)=f^{-1}(\im(f)\cap\M_0)$ an open set in $\M^*$ from the continuity of $f$. Therefore the restriction $\tilde{\Sigma}$ is smooth. The forward implication follows since $\M_0$ is dense in $\M_N$ and every open set in $\M_N$ contains an element of $\M_0$. The reverse implication follows by inclusion, since $(g',p')$ already belongs to $V$ by definition. 
	\subsection{Proof of Theorem~\ref{theorem:suff_nec}} 
	\label{app:suff_nec}
	Let $\M_0$ denote the union of all principal orbits in $\M_N$. From the principal orbit-type theorem, $\M_0$ is an open subset in $\M_N$. Therefore, under the subspace topology in $\im(f)$ the intersection of $\M_0$ with $\im(f)$ is an open set. Note that the intersection can be empty for which $\im(f)$ belongs to the complement $\bar{\M}_0=\M_N\setminus\M_0$. 
	Nevertheless, $\im(f)$ partitions into two subsets $S_0=\im(f)\cap\M_0$ and $\bar{S}_0=\im(f)\cap\bar{\M_0}$. Accordingly we can define $\mc{K}_*^0=\bigcup_{\psi^*\in S_0}\mc{O}_{0,\psi^{*}}$ made up of principal orbits, and $\mc{\bar{K}}_*^0=\bigcup_{\psi^*\in \bar{S}_0}\mc{O}_{\psi^{*}}$ with exceptional or singular orbits. Moreover, since distinct orbits are mutually exclusive, these two sets also partition the set $\mc{K}_*$, i.e. $\mc{\bar{K}}_*^0\cup\mc{K}_*^0=\mc{K}_*$ and $\mc{\bar{K}}_*^0\cap\mc{K}_*^0=\emptyset$. Thus,  we have 
	\begin{equation*}
		\Omega_{\rm FS}(\mc{K}_*)=\Omega_{\rm FS}(\mc{K}_*^0)+\Omega_{\rm FS}(\mc{\bar{K}}_*^0)\quad.
	\end{equation*}
	
	For the case of proper action on compact manifolds, $\bar{\M}_0$, which is the union of all singular and exceptional orbits has $\dim(\bar{\M}_0)\leq\dim(\M_N)-1$ (see theorem 3.8 in ref.~\cite{BredonGlenE1972Itct} for an extensive discussion and proof). 
	Therefore, $\mc{\bar{K}}_*^0$ which is the intersection of $\bar{\M}_0$ with $\mc{K}_*$  is  a null set under $d\Omega_{\rm FS}$, i.e. $\Omega_{\rm FS}(\mc{\bar{K}}^0_*)=0$.
	The forward implication i.e. local surjectivity at one point implies $\Omega_{\rm FS}(\mc{K}_*)>0$ since $d\Omega_{\rm FS}$ is strictly positive. For the reverse implication, from lemma~\ref{lemma:surj_equiv}, since local surjectivity of the restriction is equivalent to that of the full map, 
	when $\tilde{\Sigma}$ is nowhere locally surjective, $\Sigma$ is nowhere locally surjective. Hence, every elements iof $\im(\Sigma)$ is a critical values, $\im(\Sigma)$ therefore has zero measure from theorem~\ref{theorem:sard}.

	\subsection {Proof of Theorem ~\ref{theorem:gen_analytic}}
	\label{app:gen_analytic}
	Define restrictions of the map $\Sigma$ to either factors as $\Sigma^g:\M^*\to\M_N$ and $\Sigma^p:\G\to\M_N$ with the assignment $p\mapsto\Sigma(g,p)$ and $g\mapsto\Sigma(g,p)$ so that $\rank(d\Sigma^p_g)=\rank(d\Sigma^p_e)=\dim\mc{O}_{\Sigma(e,p)}\leq\dim\mc{O}_0$ and $\rank(d\Sigma^g_p)=\dim(\M^*)$.
	
	Noting the Jacobian matrix of $d\Sigma^p_e$ depends analytically on $p$ in the connected domain $\M^*$ and $S_0$ being non-empty, it achieves the maximum possible rank (from lemma~\ref{lemma:analyt_gen_rank}), we get $\Omega_{\im(f)}(S_0\cup S_E)=1$ under the Riemannian volume form $d\Omega_{\im(f)}$ on $\im(f)$. This is because measurable subsets of  $\M^*$ are pushed forward with preserved measures under the embedding $f$, while maximal rank is possible on exceptional orbits as well, i.e. those orbits with the same topological dimensions as the principal orbits but are not diffeomorphic.  From lemma~\ref{lemma:conull_abs_cont}, we get $\mu_*(S_0\cup S_E)=1$ for any absolutely continuous measure $d\mu_*$ with respect to $d\Omega_{\im(f)}$.

	First, we see that $\rank(d\Sigma_{(e,p)})=\rank(d\Sigma_{(g,p)})$ for all $g\in\G$ as we can always translate the map diffeomorphically to any element of the connected Lie group $\G$ via left translation. In other words, the family $\Sigma^g:\M^*\to\M_N$ parametrises a family of smooth embeddings of $\M^*$. The assumption $\dim \mc{O}_0+\dim\M^*>\dim\M_N$ makes $r_{\rm max}=\dim\M_N$.
	Therefore, applying lemma~\ref{lemma:analyt_gen_rank} to the Jacobian matrix of $d\Sigma_{(e,p)}:T_e\G\oplus T_p\M^*\to T_{\Sigma(e,p)}\M_N$, we get $R_0\neq\emptyset\implies \mu_*(R_0)=1$. Since $R_0$ is also a set of full measure with the \emph{strictly positive} Riemannian volume form in $\im(f)$, it is also dense in $\im(f)$. 
	
	To see that $S_0$ is dense in $\im(f)$, it suffices to show that an open neighbourhood that contains an element of $R_0$ contains an element of $S_0$ since $R_0$ is already dense in $\im(f)$. The equivalence of local surjectivity of $\Sigma$ and $\tilde{\Sigma}$ from lemma~\ref{lemma:surj_equiv} guarantees that any open subset that contains an element of $R_0$ contains an element of $S_0$. Therefore, $S_0$ is dense in $\im(f)$.

	\bibliographystyle{apsrev4-2}
	\bibliography{ref}

\end{document}